\documentclass[journal]{IEEEtran}

\ifCLASSINFOpdf
   \usepackage[pdftex]{graphicx}
\else
\fi

\usepackage{amsmath}
\interdisplaylinepenalty=2500

\usepackage{mathtools}

\usepackage{amsfonts}
\usepackage{amsthm}
\usepackage{thmtools}
\declaretheorem[name=Theorem]{theorem}

\declaretheorem[style=remark, numbered=no]{remark}

\usepackage{siunitx}

\ifCLASSOPTIONcompsoc
  \usepackage[caption=false,font=normalsize,labelfont=sf,textfont=sf]{subfig}
\else
  \usepackage[caption=false,font=footnotesize]{subfig}
\fi

\hyphenation{op-tical net-works semi-conduc-tor}

\newcommand{\textsub}[1]{\textnormal{#1}}

\usepackage{bm}
\renewcommand{\vec}[1]{\bm{#1}}

\newcommand{\CW}{\mathcal{C}}

\newcommand{\dmin}{d_{\textsub{min}}}
\newcommand{\ddesign}{d_{\textsub{des}}}

\newcommand{\que}{\mathord{?}}

\newcommand{\ZO}{\{0, 1\}}
\newcommand{\ZQO}{\{0, \que, 1\}}

\DeclareMathOperator{\DF}{\mathsf{D}}
\newcommand{\DFC}{\DF_{\textsub{C}}}

\DeclareMathOperator{\dH}{d}

\DeclareMathOperator{\we}{w}

\newcommand{\Sp}{\mathcal{S}}

\newcommand{\EsNO}{\frac{E_{\textsub{s}}}{N_0}}
\newcommand{\lEsNO}{E_{\textsub{s}} / N_0}

\DeclareMathOperator{\QFunc}{Q}

\newcommand{\Topt}{T_{\textsub{opt}}}
\newcommand{\gainTheo}{\Delta(\lEsNO)^{\ast}}

\newcommand{\xor}[1]{\overline{#1}}

\newcommand{\compl}[1]{\mathord{\sim} #1}
\newcommand{\nE}{{\compl{E}}}
\newcommand{\E}{E}

\newcommand{\dE}[1]{\operatorname{d}_{E(#1)}}
\newcommand{\dnE}[1]{\operatorname{d}_{\compl{E(#1)}}}
\newcommand{\numE}{E}

\DeclareMathOperator{\Prob}{\mathbb{P}}

\newcommand{\Trans}[2]{(#1 \rightarrow #2)}
\newcommand{\T}[3]{T_{#1 \rightarrow #2}\left( #3 \right)}
\newcommand{\Ts}[2]{T_{#1 \rightarrow #2}}

\DeclareMathOperator{\Error}{Error}
\newcommand{\PError}{\Prob\left(\Error(\Dp, \Ep)\right)}

\newcommand{\Dp}{D^\prime}
\newcommand{\Ep}{E^\prime}

\newcommand{\Delp}{\Delta^\prime}

\usepackage{bbm}
\newcommand{\ind}[1]{\mathbbm{1}_{\{#1\}}}

\newcommand{\config}{\mathcal{K}}

\newcommand{\BER}{\mathrm{BER}}
\newcommand{\BEP}{\rho}

\newcommand{\OmegaP}{\Omega_{\textsub{p}}}

\newcommand{\deltaM}{\delta_{\textsub{m}}}
\newcommand{\epsilonM}{\epsilon_{\textsub{m}}}
\newcommand{\deltaC}{\delta_{\textsub{c}}}
\newcommand{\epsilonC}{\epsilon_{\textsub{c}}}
\newcommand{\chiM}{\vec{\chi}_{\textsub{m}}}
\newcommand{\chiC}{\vec{\chi}_{\textsub{c}}}

\newcommand{\Mset}{\mathcal{M}}

\DeclareMathOperator*{\argmin}{arg\,min}

\begin{document}
\title{Error-and-Erasure Decoding of Product and Staircase Codes}

\author{Lukas Rapp and Laurent Schmalen,~\IEEEmembership{Senior~Member,~IEEE}%
\thanks{This work has received funding from the European Research Council (ERC) under the European Union’s Horizon 2020 research and innovation programme (grant agreement No. 101001899)}
\thanks{Both authors are with Karlsruhe Institute of Technology (KIT), Communications Engineering Lab (CEL), Hertzstr. 16, 76187 Karlsruhe, Germany. E-mail: \texttt{lukas.rapp3@student.kit.edu}, \texttt{schmalen@kit.edu}}%
\thanks{\copyright 2021 IEEE. Personal use of this material is permitted. Permission from IEEE must be obtained for all other uses, in any current or future media, including reprinting/republishing this material for advertising or promotional purposes, creating new collective works, for resale or redistribution to servers or lists, or reuse of any copyrighted component of this work in other works. Article DOI: 10.1109/TCOMM.2021.3118789}}

\markboth{Accepted version, IEEE Transactions on Communications}%
{Rapp \MakeLowercase{\textit{et al.}}: Error-and-Erasure Decoding of Product and Staircase Codes}

\maketitle

\begin{abstract}
High-rate product codes (PCs) and staircase codes (SCs) are ubiquitous codes in high-speed optical communication achieving near-capacity performance on the binary symmetric channel. Their success is mostly due to very efficient iterative decoding algorithms that require very little complexity.
In this paper, we extend the density evolution (DE) analysis for PCs and SCs to a channel with ternary output and ternary message passing, where the third symbol marks an erasure. We investigate the performance of a standard error-and-erasure decoder and of its simplification using DE. The proposed analysis can be used to find component code configurations and quantizer levels for the channel output. We also show how the use of even-weight BCH subcodes as component codes can improve the decoding performance at high rates. The DE results are verified by Monte-Carlo simulations, which show that additional coding gains of up to 0.6\,dB are possible by ternary decoding, at only a small additional increase in complexity compared to traditional binary message passing.
\end{abstract}

\begin{IEEEkeywords}
Channel coding, product codes, iterative decoding
\end{IEEEkeywords}

\IEEEpeerreviewmaketitle

\section{Introduction}
The implementation of high-speed communications is a challenging task. Commercially available transceivers for optical communications operate at throughputs of $800$\,Gbit/s and beyond~\cite{sun2020800g}. In order to maximize throughput and transmission reach, powerful forward error correction (FEC) is necessary. Modern FEC schemes require net coding gains of 11\,dB and more at residual bit error rates (BERs) of $10^{-15}$, for code rates larger than $0.8$~\cite{graell20forward}. For high-performance applications, soft-decision decoding (SDD) of low-density parity-check (LDPC) codes is now state-of-the-art in fiber-optic communication (see, e.g.,~\cite{graell20forward} for further references and~\cite{sun2020800g} for a recent commercial example). The adoption of SDD in fiber-optic communications represented a breakthrough with respect to the classical schemes based on algebraic codes (BCH and Reed-Solomon codes) and hard-decision decoding. However, the implementation of SDD schemes for popular codes still presents several challenges at very high data rates, in particular due to large internal decoder data flows~\cite{Smith2012}. Recently,  optimized codes for SDD with reduced decoder dataflows were proposed~\cite{barakatain2018low}, but these schemes require an additional low-complexity outer code (the latter being subject of the investigations in this paper).  

Some ubiquitous applications like data-center inter- and intraconnects require an extremely low transceiver complexity, which leads to heavy power consumption constraints on the transceiver circuits that often prohibit the use of SDD. The lower complexity of typical hard-decision decoding (HDD) circuits  motivates their use for applications where complexity and throughput is a concern~\cite{Smith2012}.  Powerful code constructions for HDD date back to the 1950s, when Elias introduced product codes~\cite{Elias54}. In the recent years, the introduction of new code constructions, such as staircase codes~\cite{Smith2012} and related schemes~\cite{Jian2013}, \cite{sukmadji2019zipper}, and the link between these constructions and codes-on-graphs, has led to a renewed interest in HDD for high-speed communications. 

HDD unfortunately entails an unavoidable capacity loss stemming from the hard decision at the channel output, reducing the achievable coding gains by 1-2\,dB compared to SDD. Recent work has focused on improving the performance of modern codes for HDD by employing soft information from the channel, see, e.g.,~\cite{graell20forward,hager2018approaching,sheikh2019binary,sheikh2020novel} and references therein. Most of these schemes assume that the decoder has access to the full soft information (e.g., the channel output after transmission over a binary-input additive white Gaussian noise (AWGN) channel model) and internally use binary or ternary message passing~\cite{lechner2011analysis,yacoub2019protograph} and possibly error-and-erasure decoding~\cite{forney1966generalized,Wicker95,Blahut} of the component codes. However, in many high-speed optical communication systems, in particular those optimized for low cost and short reach, the use of a high-precision analog-to-digital converter (ADC) is prohibitive as the power consumption of an ADC scales approximately in proportion to its bit resolution~\cite{pillai2014end} and often simple 1-bit ADCs are used~\cite{ossieur2018asic}. 

A promising approach for reducing the capacity loss while still keeping both the receiver and decoding complexity low is error-and-erasure decoding of linear codes using a 3-level (ternary) ADC at the channel output.
For instance, error-and-erasure decoding can be implemented by just two usual binary decodings and a little decision logic \cite{MoonBook}.
While error-and-erasure decoding for algebraic and product codes~\cite{wainberg1972error} is well understood, its application to modern codes for high-speed communications is largely unexplored. 
The ternary output increases the capacity of the binary-output channel and can be used to improve decoding of, e.g., LDPC codes~\cite{RichardsonCapa,yacoub2019protograph}. 
Recently, it was shown using both simulations and a stall pattern analysis that error-and-erasure decoding for product or staircase codes can improve their decoding performance~\cite{sumkmadji2020zipper,soma2021errors}. A rigorous analysis including miscorrections and allowing easy parameter optimization was however lacking.

In this paper, we investigate the potential of ternary message passing with ternary channel outputs for high-rate product and staircase codes with BCH component codes. Our investigation extends the density evolution analysis of~\cite{jianApproachingCapacity2017} to ternary channel outputs and ternary message passing for various decoding algorithms. This analysis fully takes into account possible miscorrections. One goal of the analysis is to find the quantizer levels that maximize the decoding performance. Interestingly, we find that the optimal quantizer, for which the noise threshold gets minimal, is significantly different from the one maximizing the capacity and that the gains of the noise threshold that can be obtained are less than the maximum achievable capacity gain.

\section{Background}\label{sec:background}
\subsection{Product \& Staircase Codes}
\subsubsection{Component Codes}
In this paper, \(\CW\) denotes a linear \((n, k, t)\) component code of a product or staircase code that is decoded by a \(t\) error-correcting bounded distance decoder (BDD), as described in Sec. \ref{sec:Decoder}.

For product codes,  \(\CW\) is either a \((2^\nu - 1, k, t)\) binary cyclic BCH code, \(\CW_{\textsub{BCH}}\), or its \((2^\nu - 1, k - 1, t)\) cyclic even-weight subcode,
\(
    \CW_{\textsub{BCH-Ev}} \coloneqq \{\vec{c} \in \CW_{\textsub{BCH}} : \we(\vec{c}) = 2 j \}
\).
Although the minimum distance of BCH codes is in general not known, a lower bound is given by the design distance, \(\ddesign(t)\), which is \(2 t + 1\) for a BCH code and \(2 t + 2\) for its even-weight subcode.

For staircase codes, we use shortened BCH codes or shortened even-weight subcodes, i.e. we take from an \((n, k, t)\) code only the codewords \(\vec{c}\) that begin with \(c_1 = 0\) and delete the first coordinate
\cite[Ch.~1. §9]{MacWilliamsSloane}. By doing so, we obtain an \((n - 1, k - 1, t)\) linear code.
The \(\ddesign(t)\) is \(2 t + 1\) for a shortened BCH code and \(2 t + 2\) for a shortened even-weight subcode.

\subsubsection{Product Code}
A product code of an \((n, k, t)\) component code \(\CW\) is a set of binary \(n {\times} n\) matrices whose rows and columns are codewords of \(\CW\), resulting in a code of rate \(r = \left(\frac{k}{n}\right)^2\). To decode a product code, the rows and columns are alternately decoded by the component decoder \(\DFC\). %
A product code can be interpreted as a generalized LDPC (GLDPC) code, hence, its performance under iterative decoding can be estimated through the average performance of a proper GLDPC ensemble.
This makes an analysis via density evolution (DE) possible as described in \cite{jianApproachingCapacity2017}.
The adequate GLDPC ensemble consists of the Tanner graphs with \(m\) constraint nodes (CNs) of degree \(n\) and \(N = \frac{n m}{2}\) variable nodes (VNs) of degree \(2\). In the following, the ensemble is denoted as \((\CW, m)\) GLDPC ensemble.

The CNs of the Tanner graphs are defined by \(\CW\), i.e., the binary values of the VNs connected to a CN must form a valid codeword of \(\CW\).
To construct a random graph of this ensemble, the outgoing edges of the VNs are connected to the sockets of the CNs via a random permutation~\cite{jianApproachingCapacity2017}.%

\subsubsection{Staircase Code}
A staircase code of an \((n, k, t)\) component code \(\CW\) is a chain of \(L\) binary matrices of size \(\frac{n}{2} {\times} \frac{n}{2}\). Its rate is \(r = 2\frac{k}{n} - 1\) \cite{Smith2012}.
Similar to the product code, we consider the \((\CW, m, L)\) spatially-coupled GLDPC (SC-GLDPC) ensemble for the analysis.
\begin{figure}
    \centering
    \includegraphics{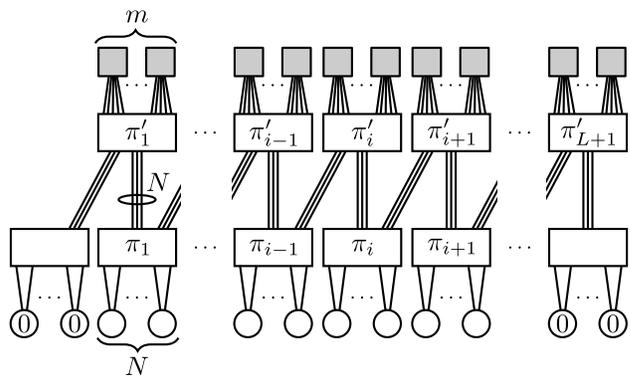}
    \caption{Random element of the \((\CW, m, L)\) SC-GLDPC ensemble. \(\pi_i\) and \(\pi^\prime_i\) are random permutations of the edges. Image based on~\cite{jianApproachingCapacity2017}.}
    \label{fig:SCGLDPCCode}
\end{figure}
Figure~\ref{fig:SCGLDPCCode} shows the construction of a random Tanner graph of this ensemble.
In the ensemble, the VNs are divided into \(L\) groups and the CNs into \(L + 1\) groups. Each group of VNs contains \(N = \frac{n m}{2}\) nodes of degree \(2\) and each group of CNs contains \(m\) nodes of degree \(n\) so that each group of VNs or CNs has \(2 N\) edges.
To construct a random Tanner graph, the \(2 N\) edges of each group are divided via a uniform random permutation \(\pi_i\) and \(\pi_i^\prime\), respectively, into two sets of \(N\) edges. The first set of edges of VN group \(i \in \{1, \dotsc, L\}\) is connected to a set of edges of CN group \(i\) and the second set is connected to a set of edges of CN group \(i + 1\). 
The remaining edges of CN group \(1\) and \(L + 1\) are connected to VNs with the fixed value \(0\), which can be shortened.

\subsubsection{GLDPC Decoding}
The GLDPC codes of both ensembles are decoded via the same message passing algorithm, which we briefly explain here. The CNs and VNs of the Tanner graph are indexed. Let \(\sigma_j(k)\) be the index of the VN that is connected to socket \(k \in \{1, \dotsc, n\}\) of the \(j\)-th CN.
During message passing, the messages belonging to a set \(S\) are passed along the edges between VNs and CNs. For HDD, the messages are from \(S = \ZO\) and for the error-and-erasure decoding introduced below, \(S = \ZQO\). 
Let \(\nu_{i,j}^{(\ell)} \in S\) be the message that is passed from the \(i\)-th VN to the \(j\)-th CN in the \(\ell\)-th iteration and let \(\tilde{\nu}_{i, j}^{(\ell)}\) be the message that is passed back from CN \(j\) to VN \(i\) in the \(\ell\)-th iteration.
To decode a received word \(\vec{r} =(r_1, r_2, \ldots)\), where \(r_i \in S\) is the received channel value of the \(i\)-th VN, the following steps are performed:
During initialization, the received channel value, \(r_i\), of each VN \(i\) is sent to its two connected CNs, \(j, j^\prime\), where we set \(\nu_{i, j}^{(1)} = \nu_{i, j^\prime}^{(1)} = r_i\).
Then, several decoding iterations are performed consisting of a CN update followed by a VN update.

In the \(\ell\)-th CN update, each CN \(j\) receives the incoming messages
\((\nu_{\sigma_j(1), j}^{(\ell)}, \dotsc, \nu_{\sigma_j(n), j}^{(\ell)})\). To calculate the message that is sent back to the VN \(i, \sigma_j(k)\) connected at the \(k\)-th position of CN \(j\), two different approaches are considered: Intrinsic message passing (IMP)~\cite{Smith2012} and extrinsic message passing (EMP)~\cite{jianApproachingCapacity2017}.
For IMP, the incoming messages are combined to the word
\[
    \vec{y}_{j, \textsub{IMP}}^{(\ell)} 
    \coloneqq
    (\nu_{\sigma_j(1), j}^{(\ell)}, \dotsc, \nu_{\sigma_j(n), j}^{(\ell)})
\]
and are decoded by the component decoder \(\DFC\). Then, the \(k\)-th symbol of the result is sent back to VN \(i\): 
\(
    \tilde{\nu}_{i, j}^{(\ell)} 
    = 
    \big[\DFC(\vec{y}_{j, \textsub{IMP}}^{(\ell)})\big]_k
\).
For EMP, the \(k\)-th incoming message is replaced by the channel value \(r_i\) resulting in the extrinsic word
\[
    \phantom{.}
    \vec{y}_{j, \textsub{EMP}, k}^{(\ell)} 
    \coloneqq
    (
        \nu_{\sigma_j(1), j}^{(\ell)}, \dotsc ,
        \nu_{\sigma_j(k-1), j}^{(\ell)}, r_{i}, 
        \nu_{\sigma_j(k+1), j}^{(\ell)}, \dotsc
    )
    .
\]
Then, the word is decoded and the \(k\)-th symbol of the result is sent back to VN \(i\):
\(
    \tilde{\nu}_{i, j}^{(\ell)} 
    = 
    \big[\DFC(\vec{y}_{j, \textsub{EMP}, k}^{(\ell)})\big]_k
\).

In the VN update, each VN \(i\) receives two messages from its connected CNs \(j\), \(j^\prime\)
and forwards to each CN the message that it has received from the respective other CN:
\(\nu_{i, j^\prime}^{(\ell+1)} = \tilde{\nu}_{i, j}^{(\ell)}\),
\(\nu_{i, j}^{(\ell+1)} = \tilde{\nu}_{i, j^\prime}^{(\ell)}\).

At the end of the message passing, each VN has two incoming messages to determine the decoding result. To make a decision, one of the incoming messages is chosen randomly. If the message is erased, it is replaced by a random binary value.%
\footnote{Note that this decision rule is not optimal. In practical decoders, one would only choose randomly if both messages are erased. We use the proposed rule because it allows an easy calculation of the final bit error probability in the DE (see \eqref{eqn:Def_BER}).}

\subsubsection{Remarks}
In practice, a sliding window is used to decode a staircase code. In most cases, to the best of our knowledge, IMP is used due to the lower memory requirements. This window slides over the binary matrices and decoding is only performed for matrices in the window \cite{Smith2012}. We neglect windowed decoding in our analysis and our results can be seen as an upper bound on the performance under windowed decoding.

Note that EMP requires \(n\) component decodings per CN update whereas IMP requires only one. 
However, for EMP decoding without erasures, there exist an algorithm that requires only one decoding \cite{jianApproachingCapacity2017}. Hence, the complexity does not increase by the factor \(n\) because CN updates without erasures can be carried out with this algorithm and the number of erasures is normally very low after only a few iterations.

Further note that we restrict our analysis to the GLDPC and SC-GLDPC code ensembles. Product and staircase codes are not necessarily typical code realizations of these ensembles, hence the analysis may not directly apply. Numerical investigations show however good agreements between the ensemble analysis and the decoding performance of product and staircase codes \cite[Sec. 7.5.9]{graell20forward}, \cite{hager2018approaching}. The behavior of more deterministic code constructions has been analyzed in~\cite{hager2017density} and~\cite{zhang2017spatially} for the binary erasure channel (using two different approaches), but the authors acknowledge that their approach cannot be easily extended towards more general channels without ignoring miscorrections.

\subsection{Error-and-Erasure Decoding}
\subsubsection{Channel} \label{sec:channel}
For the following analysis, we assume that the GLDPC codewords \(\vec{x}\) are transmitted over a binary-input additive white Gaussian noise (BI-AWGN) channel which generates 
\(
    \tilde{r}_i \coloneqq (-1)^{x_i} + n_i
\), 
where \(n_i\) is an AWGN sample with noise variance
\(
    \sigma^2 = (2 \lEsNO)^{-1}
\).
To reduce the capacity loss due to HDD, error-and-erasure decoding uses a ternary channel output and message alphabet \(\ZQO\). To determine the discrete channel outputs \(r_i\), the values \(\tilde{r}_i \in [-T, +T]\) are declared as erasure ``\(\que\)''. Values outside this interval are mapped to \(0\) and \(1\) by the usual HDD rule, i.e. \(r_i = 1\) for \(\tilde{r}_i < -T\) and \(r_i = 0\) for \(\tilde{r}_i > +T\).

This channel is abstracted through the discrete, memoryless channel model shown in Fig.~\ref{fig:channel_model}.
\begin{figure}
    \centering
    \includegraphics{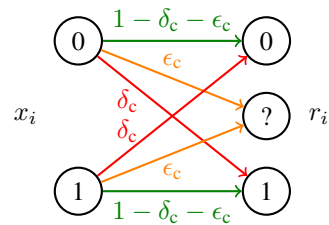}
    \caption{Discrete channel model}
    \label{fig:channel_model}
\end{figure}
The channel transition probabilities are given by
\begin{equation}\label{eqn:Channel_Trans_Probs}
\begin{aligned}
    \operatorname{\deltaC}
    &= 
    \QFunc\left(\sqrt{2\EsNO} (T + 1)\right),\\
    \operatorname{\epsilonC}
    &= 
    1 -  
    \QFunc\left(\sqrt{2\EsNO} (T - 1)\right) - 
    \QFunc\left(\sqrt{2\EsNO} (T + 1)\right),
\end{aligned}
\end{equation}
where \(\deltaC\) is the probability for an error and \(\epsilonC\) for an erasure.
Since the channel is completely described through \(\lEsNO\) and \(T\), it is denoted by \((\lEsNO, T)\).

It is easy to see that for a fixed $T$, the capacity of this channel is
\begin{equation*}
    \phantom{,}
    C\left(\EsNO, T\right) 
    = 
    c_{\textsub{c}} \log_2\left( \frac{2 c_{\textsub{c}}}{1 - \epsilonC} \right) 
    + \deltaC \log_2 \left( \frac{2 \deltaC}{1 - \epsilonC} \right)
    ,
\end{equation*}
where \(c_{\textsub{c}} \coloneqq 1 - \deltaC - \epsilonC\) is the probability of correctly receiving a symbol.

Optimization of \(C(\lEsNO, T)\) with respect to \(T\) results in a capacity gain for this channel compared to HDD (\(T=0\)). 

\subsubsection{Decoder} \label{sec:Decoder}
The decoder of the introduced component codes \(\CW\) is a bounded distance decoder (BDD).
Let
\[
    \Sp_t(\vec{c}) \coloneqq \{\vec{y} \in \ZO^n : \dH(\vec{y}, \vec{c}) \leq t\}
\]
be the Hamming sphere of radius \(t\) around a codeword \(\vec{c} \in \CW\) that consists of all words \(\vec{y} \in \ZO^n\) whose Hamming distance from \(\vec{c}\) is less than or equal to \(t\). 
For a given word \(\vec{y} \in \ZO^n\), a \(t\) error-correcting BDD selects the codeword \(\vec{c} \in \CW\) for which \(\vec{y} \in \Sp_t(\vec{c})\) holds. Otherwise, a decoding failure is declared:
\[
    \DF_{\textsub{BDD}}(\vec{y}) \coloneqq
    \begin{cases}
        \vec{c} & \text{if \(\exists \vec{c}\in\CW\) such that \(\vec{y} \in \Sp_t(\vec{c})\)} \\
        \text{fail} & \text{otherwise}.%
    \end{cases}
\]

Since the channel output alphabet is \(\ZQO\), a BDD cannot be used. Hence, we use the following error-and-erasure decoder (EaED), which is a modification of \cite[Sec.~3.8.1]{MoonBook}. %
Let \(\numE(\vec{y}) = |\{i \in \{1, \dotsc, n\} : y_i = \que\}|\) be the number of erasures of the word \(\vec{y}\) and let \(\dnE{\vec{y}}(\vec{a}, \vec{b})\) be the Hamming distance between the words \(\vec{a}\) and \(\vec{b}\) at the unerased coordinates of \(\vec{y}\).
The EaED performs the following steps to decode a word \(\vec{y} \in \ZQO^n\) to the result \(\vec{w}\):
\begin{enumerate}
    \item If \(\numE(\vec{y}) \geq \ddesign(t)\), \(\vec{w} = \vec{y}\). Otherwise, continue with \ref{item:GenerateY1}).
    \item \label{item:GenerateY1} 
    Generate a random vector \(\vec{p} \in \ZO^{\numE(\vec{y})}\) and place the values of \(\vec{p}\) at the erased coordinates of \(\vec{y}\), yielding \(\vec{y}_1\).
    \item \label{item:GenerateY2} 
    Generate the inverted vector of \(\vec{p}\), denoted by \(\xor{\vec{p}}\), by inverting every bit of \(\vec{p}\) and placing the values of \(\xor{\vec{p}}\) at the erased coordinates of \(\vec{y}\), yielding \(\vec{y}_2\).
    \item Decode $\vec{y}_i$, $i\in\{1,2\}$, using the BDD: 
    \(\vec{w}_i = \DF_{\textsub{BDD}}(\vec{y}_i)\) %
    \item \label{item:Decision_EaED} 
    Obtain the decoding result, \(\vec{w}\), as
    \begin{LaTeXdescription}%
        \item[Case 1:] 
        \(\vec{w}_1 = \vec{w}_2 = \text{fail}\): \(\vec{w} = \vec{y}\)
        \item[Case 2:] 
        \(\vec{w}_i \in \CW \text{ for exactly one \(\vec{w}_i\)}\): \(\vec{w} = \vec{w}_i\)
        \item[Case 3:]
        \(\vec{w}_1, \vec{w}_2 \in \CW\):
        Output the codeword \(\vec{w}_i\) for which \(\dnE{\vec{y}}(\vec{y}, \vec{w}_i)\) is smallest.
        If both distances are equal, one codeword \(\vec{w}_i\) is chosen at random.
    \end{LaTeXdescription}
\end{enumerate}
\begin{remark}
In practical decoders, \(\vec{p}\) is usually the all-zero vector. However, this is not suitable for our analysis, based on the all-zero codeword, because the decoder preferably decodes to the all-zero codeword leading to a falsified too good analysis result.
The use of random vectors in step 2), akin to the channel adapters of~\cite{hou2003capacity}, solves this issue which we prove in Theorem~\ref{thm:Performance_CW_Independent}.
\end{remark}

The following theorem, based on \cite[Sec.~3.8.1]{MoonBook}, %
estimates the correction capability of the EaED:
\begin{theorem}
\label{thm:EaED_correcting_capability}
    For the defined component codes, the EaED will correct a word with \(D\) errors and \(E\) erasures for certain if
    \begin{equation}\label{eqn:Condition_EaED_correcting_capability}
        2 D + E < \ddesign(t).
    \end{equation}
\end{theorem}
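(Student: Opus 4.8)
The plan is to prove the guarantee by bounding the distance between the true codeword $\vec{c}$ and the appropriately-filled word $\vec{y}_1$ or $\vec{y}_2$, then showing that the BDD must recover $\vec{c}$ and that the decision logic in step~\ref{item:Decision_EaED}) cannot prefer a competing codeword. First I would assume the received word $\vec{y}$ has $D$ errors and $E$ erasures relative to the transmitted codeword $\vec{c}\in\CW$, meaning that at the $n-E$ unerased coordinates there are exactly $D$ positions where $\vec{y}$ disagrees with $\vec{c}$. Since \eqref{eqn:Condition_EaED_correcting_capability} gives $2D+E<\ddesign(t)$, in particular $E<\ddesign(t)$, so the first step of the EaED does not trigger and we genuinely run the two BDD attempts on $\vec{y}_1$ and $\vec{y}_2$.

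The key observation is a \emph{covering} argument on the erased coordinates. The fillings $\vec{p}$ and $\xor{\vec{p}}$ are bitwise complementary, so at each erased coordinate exactly one of $\vec{y}_1,\vec{y}_2$ matches the true value $c_i$. Hence if we let $E_1$ and $E_2=E-E_1$ denote the number of erased positions where $\vec{y}_1$ (respectively $\vec{y}_2$) agrees with $\vec{c}$, then the Hamming distances to $\vec{c}$ are $\dH(\vec{y}_1,\vec{c})=D+(E-E_1)$ and $\dH(\vec{y}_2,\vec{c})=D+(E-E_2)=D+E_1$. Adding these, $\dH(\vec{y}_1,\vec{c})+\dH(\vec{y}_2,\vec{c})=2D+E<\ddesign(t)$, so at least one of the two distances is strictly less than $\tfrac{1}{2}\ddesign(t)$, which is at most $t+1$ for both component-code families (where $\ddesign(t)\in\{2t+1,2t+2\}$). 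Since the minimum distance of $\CW$ is at least $\ddesign(t)$, this forces the smaller distance to satisfy $\dH(\vec{y}_i,\vec{c})\le t$, so $\vec{y}_i\in\Sp_t(\vec{c})$ and the BDD on that attempt returns $\vec{c}$ exactly.

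It remains to check that step~\ref{item:Decision_EaED}) actually outputs $\vec{c}$. We are guaranteed at least one BDD success returning $\vec{c}$, so Case~1 is excluded. In Case~2 the unique successful decoding is $\vec{c}$ and we are done. The only delicate situation is Case~3, where both attempts succeed: say one returns $\vec{c}$ and the other returns some $\vec{c}'\in\CW$. Here I would show $\vec{c}$ still wins under the tie-breaking rule based on $\dnE{\vec{y}}(\vec{y},\cdot)$, the Hamming distance restricted to the $n-E$ unerased coordinates. By construction $\dnE{\vec{y}}(\vec{y},\vec{c})=D$, whereas for the competitor $\vec{y}_{i'}\in\Sp_t(\vec{c}')$ I would bound $\dnE{\vec{y}}(\vec{y},\vec{c}')$ from below using the triangle inequality over the unerased coordinates together with $\dH(\vec{c},\vec{c}')\ge\ddesign(t)$: the two codewords must differ in at least $\ddesign(t)$ positions, of which at most $E$ can be hidden among erasures, so they differ in at least $\ddesign(t)-E$ unerased positions, giving $\dnE{\vec{y}}(\vec{y},\vec{c}')\ge(\ddesign(t)-E)-D>D=\dnE{\vec{y}}(\vec{y},\vec{c})$ by \eqref{eqn:Condition_EaED_correcting_capability}. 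Thus $\vec{c}$ has the strictly smaller restricted distance and is selected.

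The main obstacle I anticipate is Case~3: ensuring that a miscorrected second attempt cannot spuriously out-compete the genuine codeword. This is exactly where the strict inequality in \eqref{eqn:Condition_EaED_correcting_capability} (rather than $\le$) does the work, and it is also where one must be careful that the tie-break metric $\dnE{\vec{y}}$ ignores erased coordinates, so that the filling $\vec{p}$ does not artificially inflate or deflate either candidate's score. Everything else reduces to the elementary covering identity $\dH(\vec{y}_1,\vec{c})+\dH(\vec{y}_2,\vec{c})=2D+E$ and the pigeonhole step that one summand is small; the subtlety is purely in the comparison against miscorrections.
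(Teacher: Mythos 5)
Your proposal is correct and follows essentially the same two-step argument as the paper: the complementary-filling/pigeonhole observation that $\dH(\vec{y}_1,\vec{c})+\dH(\vec{y}_2,\vec{c})=2D+E<\ddesign(t)$ forces one BDD attempt into $\Sp_t(\vec{c})$, and the minimum-distance plus triangle-inequality bound on $\dnE{\vec{y}}$ that rules out a miscorrected competitor winning Case~3. The only cosmetic difference is that you argue the Case~3 comparison directly while the paper phrases it as a contradiction; the inequality chain is the same.
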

\begin{proof} 
    See Appendix~\ref{proof:EaED_correcting_capability}.
\end{proof}

In addition, we consider a simplification of the EaED. For this, %
we define the Hamming spheres in \(\ZQO^n\) as
\[
    \Sp^3_t(\vec{c}) = \{\vec{y} \in \ZQO^n : 2 \dnE{\vec{y}}(\vec{y}, \vec{c}) + \numE(\vec{y}) < \ddesign(t)\}.
\]
The extended EaED (EaED+) is then given by
\[
    \DF_{\textsub{EaED+}}(\vec{y}) \coloneqq
    \begin{cases}
        \vec{w} \coloneqq \DF_{\textsub{EaED}}(\vec{y}) &\text{if \(\vec{w} \in \CW\) and \(\vec{y} \in \Sp^3_t(\vec{w})\)}\\
        \vec{y} & \text{otherwise}.
    \end{cases}
\]
Because of Theorem \ref{thm:EaED_correcting_capability} and the linearity of \(\CW\), the EaED decodes deterministically all \(\vec{y} \in \Sp_t^3(\vec{c})\) to a codeword \(\vec{c}\). Hence, the EaED+ decodes a word \(\vec{y}\) to a codeword \(\vec{c}\) if and only if \(\vec{y} \in \Sp_t^3(\vec{c})\). This leads to an alternative definition of the EaED+, which is used in the following analysis:
\[
    \phantom{.}
    \DF_{\textsub{EaED+}}(\vec{y})
    =
    \begin{cases}
        \vec{c} & \text{if \(\exists\vec{c}\in\CW\) such that \(\vec{y} \in \Sp_t^3(\vec{c})\)} \\
        \vec{y} & \text{otherwise}.
    \end{cases}
\]

\begin{remark}
In contrast to the EaED+, the EaED will also decode error patterns outside the Hamming spheres with a certain probability. This allows the correction of more errors but there will be also more miscorrection for patterns with too many errors. We will see later decoding configuration in which each decoder outperforms the other one.
\end{remark}

\section{Density Evolution}\label{sec:density_evolution}
In the following, we assume that the all-zero codeword is transmitted, which is justified by the following theorem:
\begin{theorem}
\label{thm:Performance_CW_Independent}
    The performance of the GLDPC decoder is independent of the transmitted codeword for all introduced component decoders.
\end{theorem}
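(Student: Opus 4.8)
The plan is to prove codeword-independence by a symmetry (coupling) argument: I would show that the ternary-output channel is output-symmetric and that every component decoder is \emph{covariant} under complementing its input by a codeword, and then couple the error events of transmitting an arbitrary GLDPC codeword $\vec{x}$ with those of transmitting the all-zero word. Here ``performance'' is the bit-error probability measured at the variable nodes; since the Tanner graph is fixed by the ensemble draw, it suffices to prove the statement for each fixed graph realization and then average.

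\emph{Channel symmetry.} First I would check that the $(\lEsNO,T)$ channel satisfies the symmetry that sending a bit $b$ and receiving $r$ is exactly as likely as sending $\xor{b}$ and receiving $\xor{r}$, with $\que$ mapped to itself. This is immediate from Sec.~\ref{sec:channel}: negating the noiseless value $(-1)^{x_i}$ negates $\tilde r_i$, the symmetric threshold at $\pm T$ interchanges the outputs $0$ and $1$ while fixing $\que$, and the Gaussian noise is symmetric. Consequently the received word under transmission of $\vec{x}$ has the same law as the ``flip-by-$\vec{x}$'' of the word received under transmission of $\vec{0}$, where flipping complements the binary entries on the support of $\vec{x}$ and leaves each $\que$ untouched.

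\emph{Decoder covariance (the crux).} The core step is to show that each decoder commutes with complementation by a component codeword $\vec{c}\in\CW$. For the BDD this follows from linearity, since the spheres $\Sp_t(\cdot)$ are translates and hence $\DF_{\textsub{BDD}}(\vec{y}\oplus\vec{c})=\DF_{\textsub{BDD}}(\vec{y})\oplus\vec{c}$; the EaED+ inherits the same property because $\Sp^3_t(\cdot)$ is translation-invariant on the unerased coordinates and preserves the number of erasures. The delicate case is the EaED, owing to the random fill in steps~\ref{item:GenerateY1}) and~\ref{item:GenerateY2}). Here I would fix the erasure set $\mathcal{E}$ (unchanged by complementation) and note that replacing the uniform pattern $\vec{p}$ by $\vec{p}\oplus\vec{c}|_{\mathcal{E}}$ sends the filled words $\vec{y}_1,\vec{y}_2$ to $\vec{y}_1\oplus\vec{c},\vec{y}_2\oplus\vec{c}$; since $\vec{p}\mapsto\vec{p}\oplus\vec{c}|_{\mathcal{E}}$ is a measure-preserving bijection of the uniform fill, BDD-covariance propagates through the two candidate decodings, and---because $\dnE{\vec{y}}(\cdot,\cdot)$ is invariant under complementing both arguments by $\vec{c}$---through the Case~3 tie-break as well. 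This is precisely the phenomenon anticipated in the remark before Theorem~\ref{thm:EaED_correcting_capability}: with the deterministic choice $\vec{p}=\vec{0}$ the coupling would break, whereas the random fill restores symmetry. I expect this step, and in particular the bookkeeping of the coupled random bits, to be the main obstacle.

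\emph{Propagation and conclusion.} Finally I would push the covariance through the entire schedule by induction on the iteration index. The initialization forwards $r_i$, which flips by $x_i$; at each CN the neighboring VN values form a sub-codeword of $\vec{x}$, so the IMP (resp.\ EMP) update flips its outgoing messages by that component codeword via the covariance above, with the fill randomness coupled as before; and the VN update merely relays messages, preserving the flip. The final decision is handled by the same bijection idea---the random choice among the two incoming messages, and the replacement of an erased message by a uniform bit $b$ (which maps to $b\oplus x_i$), both preserve the coupled law. Combining the two experiments shows that the estimate under $\vec{x}$ equals the $\vec{x}$-flip of the estimate under $\vec{0}$, so variable node $i$ is in error in one experiment exactly when it is in the other. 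Hence the per-node error probabilities coincide, and averaging over the nodes yields codeword-independent performance for IMP and EMP with the BDD, EaED, and EaED+.
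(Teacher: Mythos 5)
Your proposal is correct and follows essentially the same route as the paper: the decisive step in both is the decoder symmetry condition under translation by a codeword, with the EaED case handled by exactly the measure-preserving substitution of the uniform fill vector ($\vec{p}\mapsto\vec{p}+\vec{c}|_{E}$), after which the claim propagates through the message-passing schedule as in Richardson--Urbanke. The paper simply states the BDD/EaED+ covariance as immediate and delegates the channel-symmetry and iteration-induction steps to the cited reference, whereas you spell them out; the substance is the same.
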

\begin{proof} 
    See Appendix~\ref{proof:Performance_CW_Independent}.
\end{proof}

To analyze the decoding performance of a product or staircase code, we
analyze the average performance of the corresponding GLDPC ensemble by DE.
For the analysis, we assume that the codewords are transmitted over a channel \((\lEsNO, T)\) and EMP is used. \(\chiC \coloneqq (\deltaC, \epsilonC)\) denotes the channel transition probabilities, which are calculated using \eqref{eqn:Channel_Trans_Probs}.

\subsection{GLDPC Ensemble}
As shown in \cite{jianApproachingCapacity2017}, the \((\CW, m)\) GLDPC ensemble can be analyzed by DE if the limit \(m \to \infty\) is considered.%
\footnote{%
It is not immediately obvious that the proposed EMP allows DE. The explanation for this is given in Appendix~\ref{sec:DE_EMP}.}
Let \(\chiM^{(\ell)} \coloneqq (\deltaM^{(\ell)}, \epsilonM^{(\ell)})\) be the error and erasure probability of the VN-to-CN messages \(\nu^{(\ell)}_{i, j}\) in the \(\ell\)-th iteration. 
In the first iteration, we have \(\chiM^{(1)} = \chiC\) because the VN-to-CN messages are initialized with the received channel values.

To derive the DE recursion, we randomly select a VN \(i\), which is connected to a CN \(j\) at position \(k = \sigma_j^{-1}(i)\) and to a second CN \(j^\prime\). Now, we consider the message \(\tilde{\nu}^{(\ell)}_{i, j}\) that is passed from CN \(j\) to VN \(i\) in the \(\ell\)-th iteration.
To compute this message, CN \(j\) constructs
\(
    \vec{e} 
    \coloneqq 
    \vec{y}_{j, \textsub{EMP}, k}^{(\ell)}
\).
By definition, \(e_k\) is replaced by \(r_i\), hence, the error and erasure probabilities of \(e_k\) are \(\chiC\). The other positions of \(\vec{e}\) are VN-to-CN messages, which are wrong or erased with the probabilities \(\chiM^{(\ell)}\).
We will call these positions ``\(\compl{k}\)'' with \(\compl{k} \subset \{1, \dotsc, n\}\) in the following.
After construction, \(\vec{e}\) is decoded to \(\vec{w} \coloneqq \DFC(\vec{e})\), and the \(k\)-th symbol \(w_k\) is sent to VN \(i\) and forwarded to CN \(j^\prime\):
\(\nu^{(\ell+1)}_{i, j^\prime} = \tilde{\nu}^{(\ell)}_{i, j} = w_k\).
This leads to the DE recursion
\begin{equation}
\label{eqn:GLPDC_Rec}
\begin{aligned}
    \phantom{.}
    \chiM^{(\ell + 1)}
    &=
    \chi_{\textsub{rec}}(\chiM^{(\ell)})
    \coloneqq
    (
        \operatorname{\delta}_{\textsub{rec}}(\chiM^{(\ell)}),
        \operatorname{\epsilon}_{\textsub{rec}}(\chiM^{(\ell)})
    )\\
    &=
    (\Prob(w_k = 1), \Prob(w_k = \que)),
\end{aligned}
\end{equation}
which is a system of two coupled recursive functions. %

Next, we decompose these probabilities. We define the event
\[
    \Error(\Dp, \Ep) 
    \coloneqq 
    \{\text{%
        \(\vec{e}\) has 
        \(\Dp\) \(1\)s and 
        \(\Ep\) \(\que\)s 
        in \(\compl{k}\)%
    }\}
\]
with the probability
\begin{multline*}
    f(\Dp, \Ep, \chiM^{(\ell)}) \coloneqq \PError =\\
    \binom{n - 1}{\Dp, \Ep}
    \left(\deltaM^{(\ell)}\right)^{\Dp}
    \left(\epsilonM^{(\ell)}\right)^{\Ep}
    \left(1 - \deltaM^{(\ell)} - \epsilonM^{(\ell)}\right)^{n - 1 - \Dp - \Ep}
    ,
\end{multline*}
where \(\binom{n - 1}{\Dp, \Ep} := \frac{(n-1)!}{\Dp!\Ep!(n-1-\Dp-\Ep)!}\) is the multinomial coefficient counting the ways of distributing \(\Dp\) \(1\)s and \(\Ep\) \(\que\)s in \(n-1\) positions.
Furthermore, we define the decoder transition probabilities
\begin{equation*}
    \T{\alpha}{\beta}{\Dp, \Ep} 
    \coloneqq
    \Prob\left(w_k = \beta \mid %
    \text{%
        \(e_k = \alpha\),
        \(\Error(\Dp, \Ep)\)%
    }\right)
\end{equation*}
which depend on the respective component decoder. We will determine these probabilities in Sec.~\ref{sec:calc_dec_trans_probs}.
Applying the law of total probability two times to \(\Prob(w_k = 1)\) results in
\begin{multline*}
    \operatorname{\delta}_{\textsub{rec}}(\chiM^{(\ell)})
    = 
    \Prob(w_k = 1) =
    \sum_{\Dp = 0}^{n-1}
    \smashoperator[r]{\sum_{\Ep = 0}^{n-1-\Dp}}
    f(\Dp, \Ep, \chiM^{(\ell)})\\ 
     \Big(\deltaC \T{1}{1}{\Dp, \Ep}
    + \epsilonC \T{\que}{1}{\Dp, \Ep} 
    + c_{\textsub{c}} \T{0}{1}{\Dp, \Ep} \Big)
    ,
\end{multline*}
where \(c_{\textsub{c}} \coloneqq 1 - \deltaC - \epsilonC\).
A similar decomposition is possible for \(\Prob(w_k = \que)\) leading to %
\[
    \operatorname{\epsilon}_{\textsub{rec}}(\chiM^{(\ell)})
    = 
    \sum_{\Dp = 0}^{n-1}
    \smashoperator[r]{\sum_{\Ep = 0}^{n-1-\Dp}}
    f(\Dp, \Ep, \chiM^{(\ell)})
    \epsilonC \T{\que}{\que}{\Dp, \Ep},
\]
where we used \(\T{0}{\que}{\Dp, \Ep} = \T{1}{\que}{\Dp, \Ep} = 0\) for all \(\Dp\) and \(\Ep\) because these transitions do not occur with the selected decoders. 

\begin{remark}
With some adjustments, it is possible to analyze codes with different component codes. For instance, for a product code with different code types for rows and columns, two different DE recursions could be applied one after the other. However, note that the degree of the VNs in the ensemble should still be \(2\) to enable the simplified VN update.
\end{remark}

\subsection{SC-GLDPC Ensemble}
To take the structure of the SC-GLDPC ensemble into account, error and erasure probabilities are defined for the messages of each VN or CN group corresponding to different edge types.
Let 
\(\vec{\chi}_{\textsub{m},i}^{(\ell)} = (\delta_{\textsub{m}, i}^{(\ell)}, \epsilon_{\textsub{m}, i}^{(\ell)})\)
be the average error and erasure probability of the messages that are sent in the \(\ell\)-th iteration from the VNs of group \(i\) to the CNs. The values of the VNs in group \(0\) and \(L + 1\) are fixed and therefore known at the decoder. Hence, their messages are always correct:
\(\vec{\chi}_{\textsub{m}, 0}^{(\ell)} = \vec{\chi}_{\textsub{m}, L + 1}^{(\ell)} = (0, 0)\).
The average error and erasure probability \(\vec{\hat{\chi}}_{\textsub{m}, i}^{(\ell)}\) of the messages sent to the CNs of group \(i\) is
\(
    \vec{\hat{\chi}}_{\textsub{m}, i}^{(\ell)}
    =
    \frac{1}{2}
    (
        \vec{\chi}_{\textsub{m}, i - 1}^{(\ell)} + 
        \vec{\chi}_{\textsub{m}, i}^{(\ell)}
    )
\)
because half of the messages are from VN group \(i - 1\) and the other half are from VN group \(i\) (see Fig.~\ref{fig:SCGLDPCCode}).
At the CNs, the CN update is performed and the CNs of group \(i\) return messages with the probabilities 
\(
    \operatorname{\chi}_{\textsub{rec}}\big(
        \vec{\hat{\chi}}_{\textsub{m}, i}^{(\ell)}
    \big)
\)
to the VNs. \(\operatorname{\chi}_{\textsub{rec}}\) denotes the DE recursion of the GLDPC ensemble as defined in \eqref{eqn:GLPDC_Rec}.

Then, at VN group \(i\), the probabilities \(\vec{\chi}_{\textsub{m}, i}^{(\ell+1)}\) are derived by averaging over the probabilities of the messages which are sent to this group in the last iteration.
This leads to the recursion
\begin{align} \label{eqn:SC_GLPDC_Rec}
    &\vec{\chi}_{\textsub{m}, i}^{(\ell+1)}
    = 
    \frac{1}{2}
    \big(
    \operatorname{\chi}_{\textsub{rec}}\big(
        \vec{\hat{\chi}}_{\textsub{m}, i}^{(\ell)}
    \big) +
    \operatorname{\chi}_{\textsub{rec}}\big(
        \vec{\hat{\chi}}_{\textsub{m}, i+1}^{(\ell)}
    \big)
    \big)\\
    &=
    \frac{1}{2}
    \left(
        \operatorname{\chi}_{\textsub{rec}}\bigg(
            \frac{\vec{\chi}_{\textsub{m}, i-1}^{(\ell)} + \vec{\chi}_{\textsub{m}, i}^{(\ell)}}{2}
        \bigg) + 
        \operatorname{\chi}_{\textsub{rec}}\bigg(
            \frac{\vec{\chi}_{\textsub{m}, i}^{(\ell)} + \vec{\chi}_{\textsub{m}, i+1}^{(\ell)}}{2}
        \bigg)
    \right)
    \nonumber
    ,
\end{align}
for \(i = 1, \dotsc, L\).

\subsection{Calculation of the Decoder Transition Probabilities}
\label{sec:calc_dec_trans_probs}
In this section, we calculate \(\T{\alpha}{\beta}{\Dp, \Ep}\) for both decoders.
For this, we only consider \(\Ts{\alpha}{\beta}\) with \(\alpha \neq \beta\).
The required transitions with \(\alpha = \beta\) are given by
\begin{align*}
    \T{1}{1}{\Dp, \Ep} &= 
    1 - \T{1}{0}{\Dp, \Ep} - \T{1}{\que}{\Dp, \Ep}, \\
    \T{\que}{\que}{\Dp, \Ep} &= 
    1 - \T{\que}{0}{\Dp, \Ep} - \T{\que}{1}{\Dp, \Ep}.
\end{align*}
Since the transition \(\Trans{1}{\que}\) does not happen, we only need to compute
\(\Ts{0}{1}\),\(\Ts{\que}{1}\),\(\Ts{1}{0}\) and \(\Ts{\que}{0}\).

For \(E \coloneqq \Ep + \ind{\alpha=\que} \geq \ddesign(t)\), with \(\ind{\alpha=\que}\) denoting the indicator function returning $1$ if the condition \(\{\alpha=\que\}\) is true and $0$ otherwise, both decoders return the input word unchanged. This results in \(\T{\alpha}{\beta}{\Dp,\Ep} = 0\) for \(\alpha \neq \beta\). Hence, in the following, only the cases with \(E < \ddesign(t)\) are considered.

\subsubsection{Weight Distributions}
For the following calculations, we require the weight distributions of the component code \(\CW\). %
Let \(A(b_1)\) denote the number of codewords of weight \(b_1\) in \(\CW\).
For \(t = 2, 3\), we calculate the weight distributions of the BCH codes by the MacWilliams identity
\cite[Theorem~3.6]{MoonBook} %
from the distributions of the corresponding dual codes, given in \cite[Sec.~6.1.3]{MoonBook}. %
For BCH codes with an unknown weight distribution, we use the asymptotically-tight binomial approximation
\[
    A(b_1) \approx
    \begin{cases}
        2^{-\nu t} \binom{n}{b_1} & \text{if \(2t + 1 \leq b_1 \leq n - (2t + 1)\)} \\
        1 & \text{if \(b_1 = 0, b_1 = n\)} \\
        0 & \text{otherwise},
    \end{cases}
\]
where \(n = 2^{\nu} - 1\) \cite[Eq.~(17)]{jianApproachingCapacity2017}. For large \(n\), there exists a bound on the relative error of the approximation of order \(n^{-0.1}\) \cite{sidel1971weight}.
The weight distribution \(A_{\textsub{Ev}}(b_1)\) of the even-weight subcode of a BCH code with weight distribution \(A(b_1)\) is \(A_{\textsub{Ev}}(b_1) = A(b_1)\) if \(b_1\) is even and \(A_{\textsub{Ev}}(b_1) = 0\) otherwise.
The weight distribution \(A_{\textsub{Sh}}(b_1)\) of a shortened code based on an BCH code or even-weight subcode of weight distribution \(A(b_1)\) and length \(n + 1\) is
\[
    A_{\textsub{Sh}}(b_1) = \frac{n + 1 - b_1}{n + 1} A(b_1).
\]
This follows directly from Theorem \ref{thm:Fixed_Weight_Dis} below because BCH codes and their even-weight subcodes are cyclic.

Besides the weight distribution, the biweight distribution \cite[Ch.~5. §6]{MacWilliamsSloane} %
is required.%
\footnote{In \cite{MacWilliamsSloane}, the biweight distribution is called ``biweight enumerator''.}
Its coefficients \(B(b_{11}, b_{10}, b_{01}, b_{00})\) count the number of ordered codeword pairs \((\vec{c}_1, \vec{c}_2) \in \CW^2\) that have the configuration \((b_{11}, b_{10}, b_{01}, b_{00})\), which measures the overlapping symbols of \(\vec{c}_1\) and \(\vec{c}_2\): An ordered pair \((\vec{c}_1, \vec{c}_2)\) has the configuration \((b_{11}, b_{10}, b_{01}, b_{00})\) if
\(
    b_{fg} = 
    |\{i \in \{1, \dotsc, n\} : c_{1, i} = f, c_{2, i} = g\}|
\)
holds for all \(f, g \in \{0, 1\}\). For instance, a pair has the configuration \((1, 0, n-1, 0)\) if at one positions both \(\vec{c}_1\) and \(\vec{c}_2\) have a \(1\) and at the other ones \(\vec{c}_1\) has a \(0\) and \(\vec{c}_2\) a \(1\).
Obviously, we have \(b_{11} + b_{10} + b_{01} + b_{00} = n\) and \(B(b_{11}, 0, 0, b_{00}) = A(b_{11})\).
To the best of our knowledge, the biweight distribution of BCH codes is not known, however, for our use case, the approximation described in Appendix~\ref{sec:App_Biweight_Dis} yields good results.

In the following calculations, the symbol at position \(k \in \{1, \dotsc, n\}\) of a codeword is often fixed. In this case, \(A_k^\alpha(b_1)\) denotes the number of codewords \(\vec{c}_1 \in \CW\) of weight \(b_1\) with \(c_{1, k} = \alpha\) and 
\(B_k^{\alpha \beta}(b_{11}, b_{10}, b_{01}, b_{00})\) is the biweight distribution with \(c_{1, k} = \alpha\) and \(c_{2, k} = \beta\) (\(\alpha, \beta \in \{0, 1\}\)).
For cyclic codes (e.g. BCH codes or their even-weight subcodes), we have the following theorem.
\begin{theorem}
\label{thm:Fixed_Weight_Dis}
    For a cyclic code of length \(n\), we have
    \begin{align}
        \label{eqn:Fixed_Weight_Dis}
        A_k^{\alpha}(b_1) 
        &= 
        \frac{b_\alpha}{n} A(b_1), \\
        \label{eqn:Fixed_Biweight_Dis}
        B_k^{\alpha \beta}(b_{11}, b_{10}, b_{01}, b_{00}) 
        &= 
        \frac{b_{\alpha \beta}}{n}
        B(b_{11}, b_{10}, b_{01}, b_{00}).
    \end{align}
\end{theorem}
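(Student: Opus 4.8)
The plan is to exploit the defining symmetry of a cyclic code, namely that the cyclic shift operator is a weight-preserving bijection of \(\CW\) onto itself, and to combine this with a straightforward double-counting argument. I would treat \eqref{eqn:Fixed_Weight_Dis} first. Write \(b_1\) for the number of ones of a weight-\(b_1\) codeword and \(b_0 \coloneqq n - b_1\) for its number of zeros, so that \(b_\alpha\) is the number of coordinates equal to \(\alpha\) in such a codeword.

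The argument then has two steps. First I would show that \(A_k^\alpha(b_1)\) is independent of \(k\). Let \(T\colon \CW \to \CW\) denote the cyclic shift by one position; since \(\CW\) is cyclic, \(T\) is a bijection, it preserves the Hamming weight, and it moves the value at coordinate \(k\) to coordinate \(k+1\) (indices taken modulo \(n\)). Hence \(T\) restricts to a bijection between the weight-\(b_1\) codewords carrying an \(\alpha\) at position \(k\) and those carrying an \(\alpha\) at position \(k+1\), so \(A_k^\alpha(b_1) = A_{k+1}^\alpha(b_1)\) for every \(k\), and all \(n\) values coincide. Second, I would double count: the sum \(\sum_{k=1}^n A_k^\alpha(b_1)\) counts pairs \((\vec{c}, k)\) with \(\we(\vec{c}) = b_1\) and \(c_k = \alpha\), and since each such codeword contributes exactly \(b_\alpha\) matching positions, the sum equals \(b_\alpha A(b_1)\). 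Combining this with the \(k\)-independence gives \(A_k^\alpha(b_1) = \frac{1}{n}\sum_{k'} A_{k'}^\alpha(b_1) = \frac{b_\alpha}{n} A(b_1)\), which is \eqref{eqn:Fixed_Weight_Dis}.

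For \eqref{eqn:Fixed_Biweight_Dis} I would repeat the same reasoning on ordered pairs of codewords, using the diagonal shift \(T \times T\) that shifts both \(\vec{c}_1\) and \(\vec{c}_2\) by one position simultaneously. Because the configuration \((b_{11}, b_{10}, b_{01}, b_{00})\) is defined coordinate-wise, applying the identical shift to both codewords leaves the configuration unchanged while moving the joint value \((c_{1,k}, c_{2,k})\) from position \(k\) to position \(k+1\); this yields the \(k\)-independence of \(B_k^{\alpha\beta}\). Each pair with the prescribed configuration has exactly \(b_{\alpha\beta}\) positions where \((c_{1,k}, c_{2,k}) = (\alpha, \beta)\), so \(\sum_{k=1}^n B_k^{\alpha\beta} = b_{\alpha\beta}\, B(b_{11}, b_{10}, b_{01}, b_{00})\), and dividing by \(n\) gives the claim. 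I do not anticipate a serious obstacle here; the only point requiring a moment of care is checking that the diagonal shift genuinely fixes the configuration, which is immediate from its coordinate-wise definition, and that the same shift must be applied to both codewords so that their overlap pattern is rigidly translated rather than scrambled.
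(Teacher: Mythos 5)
Your proof is correct and follows essentially the same route as the paper: cyclic-shift invariance to establish independence of \(k\), followed by a double-counting argument over the \(n\) coordinates. The only cosmetic difference is that the paper proves the biweight identity \eqref{eqn:Fixed_Biweight_Dis} first and obtains \eqref{eqn:Fixed_Weight_Dis} as the special case \(B_k^{\alpha\alpha}(b_1,0,0,n-b_1)\), whereas you run the identical argument twice.
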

\begin{proof} 
    See Appendix~\ref{proof:Fixed_Weight_Dis}.
\end{proof}
For shortened codes, which are not, in general, cyclic, we use \eqref{eqn:Fixed_Weight_Dis} and \eqref{eqn:Fixed_Biweight_Dis} as an approximation for \(A_k^{\alpha}\) and \(B_k^{\alpha \beta}\).

\subsubsection{EaED+}\label{sec:Trans_Prob_EaEDPlus}
We now derive \(\T{\alpha}{\beta}{\Dp, \Ep}\) for the EaED+ %
 based on \cite{jianApproachingCapacity2017} and \cite[Sec.~3.7.2]{MoonBook}. %
Consider a random experiment in which an error pattern \(\vec{e}\) is chosen from
\[
    \Omega \coloneqq
    \{
        \vec{e} \in \ZQO^n : %
        \text{%
            \(e_k = \alpha\) and
            \(\Error(\Dp, \Ep)\)%
        }
    \}
\]
uniformly at random. 
Let \(\Mset \subset \Omega\) be the subset that contains only the error patterns \(\vec{e}\) whose decoding result \(\vec{w} \coloneqq \DF_{\textsub{EaED+}}(\vec{e})\) fulfills \(w_k = \beta\). 
Then, the transition probability can be calculated through
\(
    \T{\alpha}{\beta}{\Dp, \Ep} = |\Mset| / |\Omega|
\),
where \(|\Omega| = \binom{n - 1}{\Dp, \Ep}\).
Because of \(\alpha \neq \beta\), \(\Mset\) contains exactly these error patterns of \(\Omega\) that are in \(\Sp_t^3(\vec{c})\) of a codeword 
\(
    \vec{c} \in \CW_k^\beta
    \coloneqq \{\vec{c} \in \CW : c_k = \beta\}
\).%

To count these error patterns, we consider a codeword \(\vec{c} \in \CW_k^\beta\) and an error pattern \(\vec{e} \in \Omega\), as shown in Fig.~\ref{fig:EaED_Plus_Derivation}. 
For both, the symbol at position \(k\) is fixed: \(c_k = \beta\) and \(e_k = \alpha\). 
\begin{figure}
    \centering
    \includegraphics{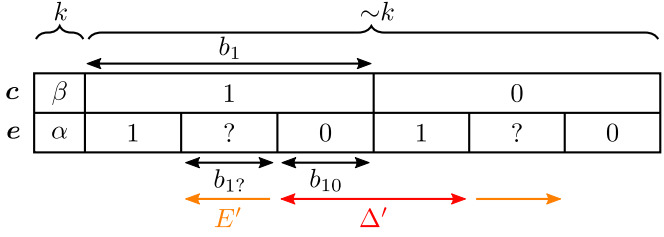}
    \caption{Schematic illustration of the variables in the derivation of the transition probabilities of the EaED+: The symbols of \(\vec{e}\) at \(\compl{k}\) are divided into groups at the \(1\)s of \(\vec{c}\) (1-coordinates) and into groups at the \(0\)s (0-coordinates).}
    \label{fig:EaED_Plus_Derivation}
\end{figure}
At the remaining positions \(\compl{k}\), \(\vec{e}\) has \(\Ep\) erased positions. In addition, let \(\Delp\) of the unerased positions differ from \(\vec{c}\). We call these positions ``differences''.
For \(\vec{e} \in \Sp_t^3(\vec{c})\), \(\Delp\) must be in the range of
\[
    0 \leq \Delp \leq \Delp_{\textsub{max}} 
    \coloneqq
    \left\lfloor
    \frac{\ddesign(t) - \Ep - 1 - \ind{\alpha=\que}}{2}
    \right\rfloor
    - \ind{\alpha \neq \que}.
\]

Moreover, let \(\vec{e}\) have \(b_{1\que}\) erasures and \(b_{10}\) differences at the \(1\)-coordinates of \(\vec{c}\) and the remaining \(\Ep - b_{1\que}\) erasures and \(\Delp - b_{10}\) differences at the \(0\)-coordinates.
Then, since \(\vec{e}\) must have \(\Dp\) \(1\)s at \(\compl{k}\), the weight \(b_1\) of \(\vec{c}\) at \(\compl{k}\) must be
\[
    b_1 = b_1(\Delp, b_{1\que}, b_{10}) \coloneqq \Dp - \Delp + b_{1\que} + 2 b_{10}.
\]
There are \(A_k^\beta(b_1 + \ind{\beta=1})\) codewords of \(\CW_k^\beta\) of weight \(b_1\). For each codeword, there are
\[
    \Theta(\Delp, b_{1\que}, b_{10}, b_1) \coloneqq \\
    \binom{b_1}{b_{1\que}, b_{10}}
    \binom{n - 1 - b_1}{\Delp - b_{10}, \Ep - b_{1\que}},
\]
different error patterns \(\vec{e}\) whose erasures and differences at \(\compl{k}\) are distributed as defined above by \(\Delp\), \(b_{1\que}\) and \(b_{10}\).

By summing over all possible combinations of \(\Delp\), \(b_{1\que}\) and \(b_{10}\), we obtain
\begin{multline*}
    |\Mset|
    =
    \sum_{\Delp = 0}^{\Delp_{\textsub{max}}}
    \sum_{b_{10} = 0}^{\Delp}
    \sum_{b_{1\que} = 0}^{\Ep} \Big(
    A_k^\beta\big(b_1(\Delp, b_{1\que}, b_{10}) + \ind{\beta=1}\big)\\
    \Theta\big(\Delp, b_{1\que}, b_{10}, b_1(\Delp, b_{1\que}, b_{10})\big)
    \Big)
    ,
\end{multline*}
where we use the convention that \(A_k^\beta(b_1) = 0\) if \(b_1 < 0\) or \(b_1 > n\) and \(\binom{n}{k_1, k_2} = 0\) if \(n < 0\), \(k_1 < 0\), \(k_2 < 0\) or \(k_1 + k_2 > n\).
Note that no error pattern is counted twice as all spheres \(\Sp_t^3(\vec{c})\) are disjoint, which is an implication of Theorem~\ref{thm:EaED_correcting_capability}.

\subsubsection{EaED}
The derivation of \(\T{\alpha}{\beta}{\Dp, \Ep}\) of the EaED is based on the same principle as the derivation for the EaED+ above. It is described in Appendix~\ref{sec:Decoder_Trans_Probs_EaED}.

\subsection{Noise Threshold}
We use the DE recursion of \(\chiM^{(\ell)}\) to evaluate the performance of the code over the channel \((\lEsNO, T)\). 
\subsubsection{GLDPC Ensemble}
We first focus on the GLDPC ensemble.
First, the channel transition probabilities \(\chiC\) of \((\lEsNO, T)\) are calculated via \eqref{eqn:Channel_Trans_Probs}.
Then, the recursion \(\chi_{\textsub{rec}}\) is applied \(\ell\) times to \(\chiM^{(1)} = \chiC\) resulting in 
\(
    \chiM^{(\ell + 1)} 
    = (\deltaM^{(\ell + 1)}, \epsilonM^{(\ell + 1)})
\).
The bit error probability after \(\ell\) decoding iterations is given by
\begin{equation}\label{eqn:Def_BER}
    \phantom{.}
    \BEP^{(\ell)}\left(\lEsNO, T\right) \coloneqq
    \deltaM^{(\ell + 1)} + \frac{1}{2} \epsilonM^{(\ell + 1)}
    .
\end{equation}
\(\BEP^{(\ell)}\) is used to define the noise threshold
\begin{equation} \label{eqn:noise_threshold}
    \left(\EsNO\right)^{\mathclap{\ast}}(T) 
    \coloneqq 
    \inf
    \bigg\{
        \EsNO \geq 0:
        \lim_{\ell \to \infty}
        \BEP^{(\ell)}\left(\EsNO, T\right)
        = 0
    \bigg\}
\end{equation}
as a performance measure of the channel~\cite{RichardsonCapa}. %
For this definition, we assume that \(\BEP^{(\ell)}\left(\lEsNO, T\right)\) is a monotonically decreasing function in \(\lEsNO\).

\subsubsection{SC-GLDPC Ensemble}
For the SC-GLDPC ensemble only the first \(32\) groups of VNs are considered, to keep the computational effort of the DE manageable.
Their error and erasure probabilities are initialized with 
\(\vec{\chi}_{\textsub{m}, 0}^{(1)} = (0, 0)\) and \(\vec{\chi}_{\textsub{m}, i}^{(1)} = \chiC\) for \(i > 0\). Then, recursion~\eqref{eqn:SC_GLPDC_Rec} is applied \(\ell\) times to \(\vec{\chi}_{\textsub{m}, i}^{(1)}\) resulting in \(\vec{\chi}_{\textsub{m}, i}^{(\ell + 1)}\). To calculate \(\BEP^{(\ell)}\), the error and erasure probabilities of VN group \(i = 1\) to \(10\) are averaged to \(
    \chiM^{(\ell + 1)} 
    = (\deltaM^{(\ell + 1)}, \epsilonM^{(\ell + 1)})
\).
Then, \(\BEP^{(\ell)}\) and the noise threshold are determined by \eqref{eqn:Def_BER} and \eqref{eqn:noise_threshold} with \(\chiM^{(\ell + 1)}\).

We limit the calculation of \(\BEP^{(\ell)}\) on the first \(10\) groups to reduce the computational effort. If the bit error probability of the first \(10\) groups converges to \(0\), it can be assumed that the bit error probability of the following groups will also converge to \(0\). Furthermore, this limitation justifies the consideration of only the first \(32\) groups in the DE. The following groups would only have a negligible effect on the performance of the first \(10\) groups because they are too far away.

\subsubsection{Numerical Estimation}
For the numerical estimation of the limit in \eqref{eqn:noise_threshold}, the recursion is applied until the change of \(\BEP^{(\ell)}\) in one iteration is less than \num{e-12}. The infimum of the set in \eqref{eqn:noise_threshold} is calculated by a binary search, which searches for the minimal \(\lEsNO\) with \(\lim_{\ell \to \infty} \BEP^{(\ell)}(\lEsNO, T) < \num{e-10}\).
We use \(\num{e-10}\) to avoid numerical instabilities, which occurred for lower error probabilities.

\section{Results}\label{sec:results}
\subsection{Theoretical Results}
We evaluated the noise threshold \((\lEsNO)^{\ast}(T)\) numerically for different \(T\) using the DE analysis based on either the GLDPC or the SC-GLDPC ensemble. The result of a product code (GLDPC ensemble) of a \((511, 484, 3)\)-BCH code is shown in Fig.~\ref{plot:Noise_Thr_DE}. 
\begin{figure}
    \centering
    \includegraphics{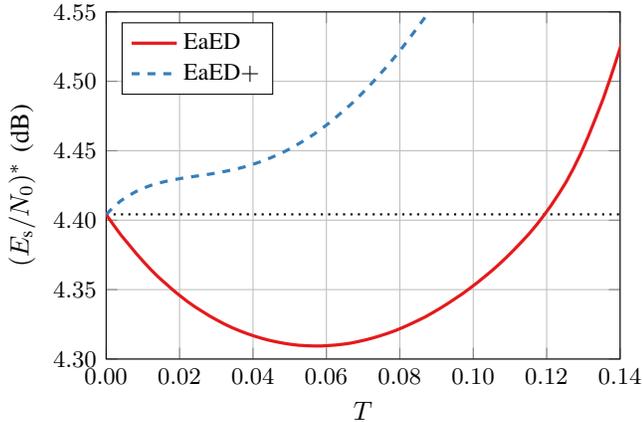}
    \caption{Noise thresholds calculated via DE for the \((511, 484, 3)\)-BCH product code. The dotted line marks the noise threshold of HDD.}
    \label{plot:Noise_Thr_DE}
\end{figure}
The dotted line marks the performance of HDD (\(T=0\)) for the EaED and EaED+.

The threshold of the EaED has a minimum at \(T \neq 0\), i.e., EaED performs better than HDD. To quantify the performance increase of the EaED compared to HDD, we define the optimal \(T\) by
\begin{align}
    \Topt &\coloneqq \argmin_{T \geq 0}\left\{ (\lEsNO)^{\ast}(T) \right\} \label{eqn:topt}\\
    \intertext{and the decrease in \((\lEsNO)^{\ast}\) at \(\Topt\) compared to HDD by the predicted gain}
    \gainTheo &\coloneqq (\lEsNO)^{\ast}(0) - (\lEsNO)^{\ast}(\Topt).\label{eqn:gainTheo}
\end{align}
For this code, we get for the EaED performance:
\(\Topt = \num[round-mode = places, round-precision=3]{0.056935}\)
and \(\gainTheo = \SI[round-mode = places, round-precision=3]{0.094738}{\dB}\).

However, the EaED+ has its minimum noise threshold at \(T=0\). For it, the use of error-and-erasure decoding results in a worse performance and erasures are not beneficial. %
One explanation for this behavior is as follows: The errors and erasures of a component code can be corrected by the EaED+ if \(2 D + E < \ddesign(t)\) is fulfilled. For \(T > 0\), because of AWGN, more correctly than incorrectly received bits are mapped to erasures. 
Hence, on average, \(2 D + E\) could be larger than \(2 D\) for \(T=0\), which results in a performance decrease. The EaED, on the other hand, can also correct some error patterns outside these spheres.

For larger values of \(T\), the noise threshold increases significantly for both decoders. The reason for the increase is that for large \(T\), many correctly received symbols are mapped to erasures, which results in a loss of information.

\subsubsection{Parameter Analysis BCH Code}
\begin{figure*}[tb]
    \captionsetup[subfigure]{justification=justified,singlelinecheck=false}
    \centering
    \subfloat[Results of the product and staircase codes of a usual and shortened BCH code, respectively.\label{plot:parameter_not_even}]{%
        \includegraphics{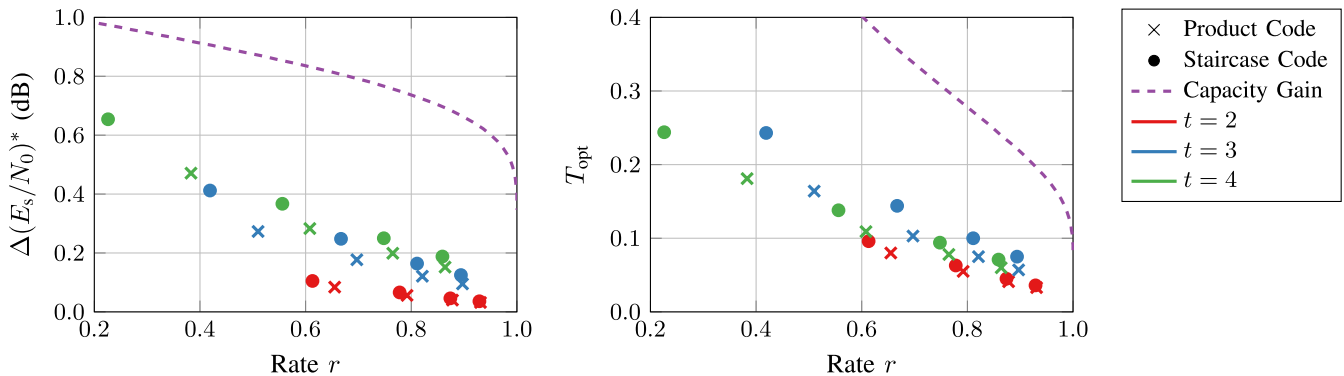}
    }\\
    \subfloat[Results of the product codes of a usual BCH code or even-weight subcode.\label{plot:parameter_even}]{%
        \includegraphics{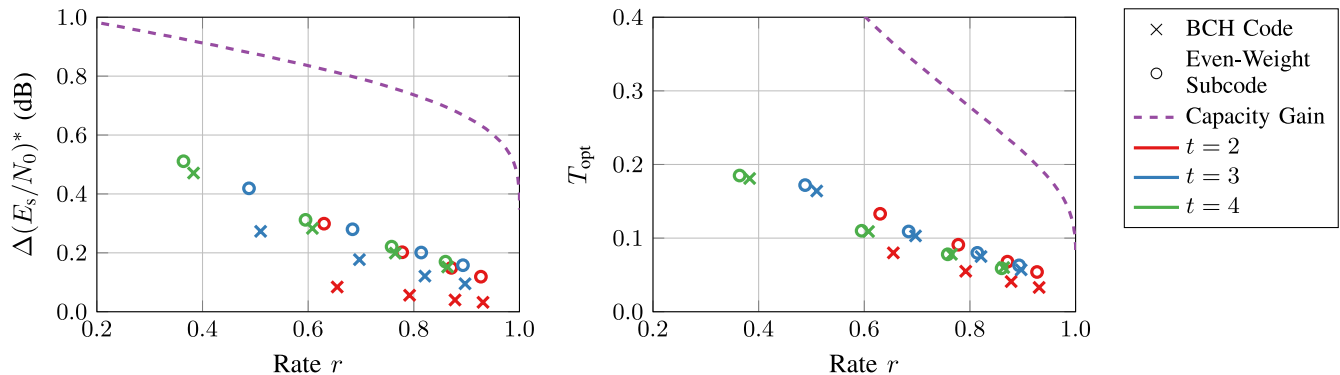}
    }
    \caption{Results of the parameter analysis: The predicted noise threshold gain \(\gainTheo\) that the EaED achieves compared to HDD is plotted on the left and the corresponding \(\Topt\) on the right.
    The component code is an \((n, k, t)\)-BCH code with \(n \in \{63, 127, 255, 511\}\) and \(t \in \{2, 3, 4\}\) or its even-weight subcode or shortened code.
    The dashed curves are the results of the capacity analysis. They mark the maximal achievable predicted gain in theory and the corresponding \(\Topt\) (Sec.~\ref{sec:channel}).
    }
    \label{plot:parameter}
\end{figure*}

We now analyze the predicted gain \(\gainTheo\) for different component codes. We limit this analysis to the EaED as this decoder is the most relevant in practice.
Figure~\ref{plot:parameter}-\subref{plot:parameter_not_even} shows \(\gainTheo\) for several product and staircase codes (SC-GLDPC ensemble) of a BCH and shortened BCH code, respectively, plotted as a function of their rates. 
The corresponding \(\Topt\) %
is shown on the right of Fig.~\ref{plot:parameter}-\subref{plot:parameter_not_even}.
The dashed curves in Fig.~\ref{plot:parameter} are the result of the capacity analysis
(Sec.~\ref{sec:channel}) and show the capacity gain, i.e. the maximal predicted gain that could be expected if error-and-erasure decoding is used instead of~HDD. %

The predicted gain increases with decreasing length \(n\) of the BCH code (decreasing rate in the diagram). For instance, the predicted gain of the product and staircase codes of BCH codes with \(n = 511\) increases from less than \SI{0.19}{\dB} to \SI{0.65}{\dB} for the staircase code of a \((63, 39, 4)\)-BCH code.
A possible reason may be that, for fixed \(t\), the number of correctable erasures per bit decreases with \(n\) according to Theorem~\ref{thm:EaED_correcting_capability}.
Furthermore, the predicted gain increases with \(t\), and all staircase codes achieve a larger predicted gain than the product code of the same component code.

\subsubsection{Parameter Analysis Even-Weight Subcode}
Figure~\ref{plot:parameter}-\subref{plot:parameter_even} shows the predictd gain \(\gainTheo\) and \(\Topt\) of product codes that are constructed from an even-weight subcode (circles) compared to the results of Fig.~\ref{plot:parameter}-\subref{plot:parameter_not_even} (crosses).
For the sake of clarity, the results of the staircase codes are omitted as they are similar to the ones of the product codes.

The use of the even-weight subcode leads to an increase in the predicted gain, in particular for smaller values of \(t\).
This increase can be motivated using Theorem~\ref{thm:EaED_correcting_capability}: a word is corrected for \(2 D + E \leq 2 t\) if a BCH code is used and for \(2 D + E \leq 2 t + 1\) if its even-weight subcode is used.
Hence, using an even-weight subcode enables the correction of one extra erasure.
This explains why even-weight subcodes benefit more from error-and-erasure coding. 
Furthermore, it explains the large increase for \(t=2\): Because of the small error-correcting capability, the extra erasure has a greater impact than for larger~\(t\).

\subsection{Simulation}
To check if the theoretical results of the DE are consistent with the performance, we simulated the performance of product and staircase codes.
In this section, we define the noise threshold \((\lEsNO)^{\ast\ast}\) as that \(\lEsNO\) for which the output \(\BER\) is equal to \(\BER_{\textsub{target}} \coloneqq \num{e-4}\) after \num{20} decoding iterations.
The simulated gain and \(\Topt\) are defined in the same way as the predicted gain and \(\Topt\) in \eqref{eqn:gainTheo} and \eqref{eqn:topt}.

In the simulation, the points of the \(\BER\)-\(\lEsNO\)-curve are estimated by a Monte Carlo method, along with a binary search to determine the intersection of the curve with \(\BER_{\textsub{target}}\) at \((\lEsNO)^{\ast\ast}\).
During the binary search, the number of trials is dynamically adapted to ensure that despite the randomness of the simulations, the estimated \(\BER\) is greater or smaller than \(\BER_{\textsub{target}}\) with sufficiently large confidence.

Figure~\ref{plot:simulation} compares the simulation results of a product code of the \((511, 484, 3)\)-BCH code with the results of the DE analysis of Fig.~\ref{plot:Noise_Thr_DE}.
For each decoder, we did two simulations, one with EMP and one with IMP.
\begin{figure*}[tb]
    \centering
    \subfloat[EaED]{%
        \includegraphics{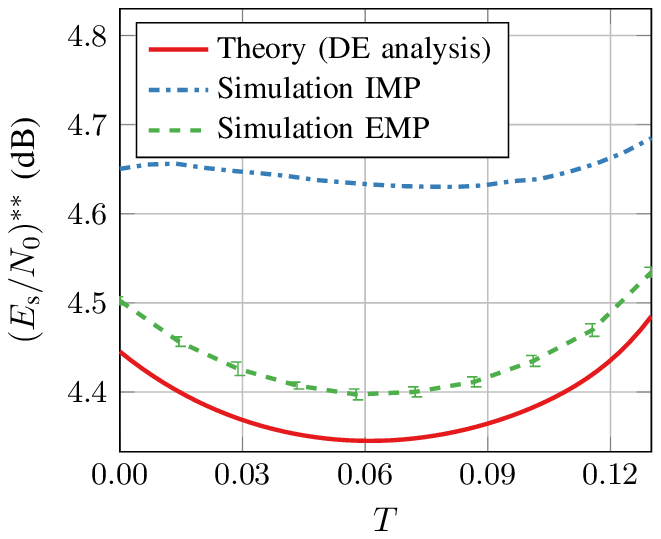}%
    }%
    \qquad
    \subfloat[EaED+]{%
        \includegraphics{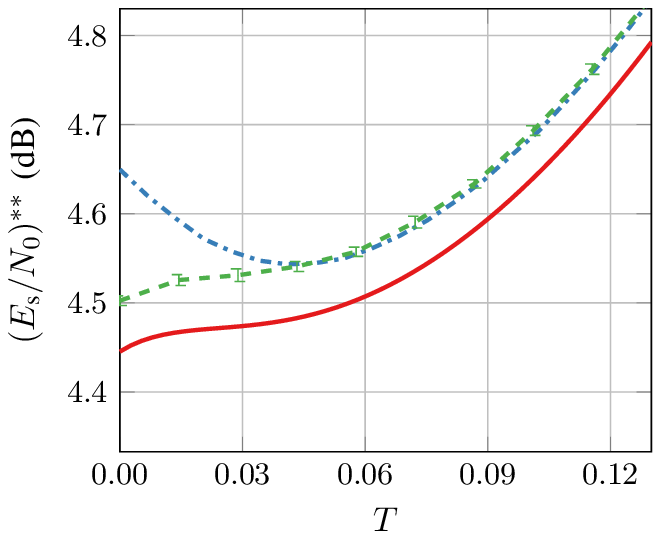}%
    }%
    \caption{
        Simulation results of a product code of the \((511, 484, 3)\)-BCH code compared with the results of the DE analysis.
        The error bars of the EMP curve are the remaining search interval after the termination of the binary search. The error bars of the IMP results were omitted because they are negligible.
    }
    \label{plot:simulation}
\end{figure*}
The plots show an approximately constant gap between the predicted thresholds and the simulated \((\lEsNO)^{\ast\ast}\) with EMP decoding (The gap slightly decreases over \(T\) and is in the range of \SIrange{0.050}{0.058}{\dB}.). 
The gap is due to finite length effects because the DE analysis considers GLDPC graphs of infinite size in contrast to the finite size of the simulated product code.
Since the gap is approximately constant over \(T\), the predicted gain and \(\Topt\) of the DE analysis match those of the results in practice.

However, for both decoders, the curve ``Simulation IMP'' has no similarity to the theory.
Hence, an estimation of the simulated gain of error-and-erasure coding with DE is not possible if IMP is used.
Nevertheless, the IMP performance of the EaED+ is quite surprising:
Although this decoder achieves no simulated gain using EMP decoding, it achieves a simulated gain of around \SI{0.106}{\dB} at \(\Topt = \num{0.04}\) using IMP decoding.
It outperforms IMP decoding of the EaED, which has only a negligible simulated gain.

Furthermore, we simulated the product code of the \((63, 45, 3)\)-BCH code that is decoded by the EaED (results not shown). 
In this case, the DE analysis underestimates the simulated gain of the EMP simulation by \SI{21}{\percent}, while \(\Topt\) is calculated correctly.
The difference between predicted and simulated gain may result from finite length effects and the approximation of the biweight distribution.%

Figure \ref{plot:ber_curves_simulation} shows the simulated BER curves of a product code of the \((511, 484, 3)\)-BCH code that is decoded by both EaED and EaED+ with \(20\) decoding iterations using either EMP or IMP. For \(T\), we choose \(0\) or the \(\Topt\) of the respective decoder.
We observe that error-and-erasure decoding does not lead to early error floors and that the gains are consistent with the DE results.

\begin{figure}[tb]
    \centering
    \includegraphics{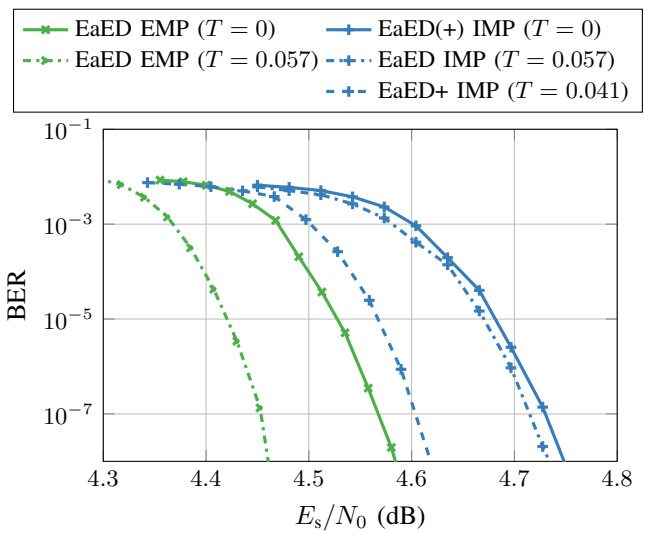}
    \caption{Simulated BER curves for a product code of the \((511, 484, 3)\)-BCH code using EMP and IMP decoding, respectively.
    \label{plot:ber_curves_simulation}}
\end{figure}

\section{Conclusions \& Outlook}\label{sec:conclusion}
We analyzed the error-and-erasure decoding of product and staircase codes based on BCH codes or their even-weight subcodes. 
For the analysis, we formulated DE on the corresponding GLDPC or SC-GLDPC ensembles that are decoded with EMP.
We have shown that error-and-erasure decoding archives a gain in \(\lEsNO\) compared to HDD, whereby the predicted gain is larger for lower rate codes and if an even-weight subcode is used as a component code. Finally, we have verified the results by a simulation of a product code using both EMP decoding but also the simpler IMP decoding, where we also observed predicted gains for a variation of the component code decoders.

In practice, instead of using the even-weight subcodes as component codes, BCH codes are often extended by a parity check bit. 
Since these codes have also an even design distance, we assume that their predicted gains are comparable with the results of the even-weight subcodes.
A detailed analysis of extended BCH codes is subject of further work.

\appendices
\section{Proof of Theorem~\ref{thm:EaED_correcting_capability}}
\label{proof:EaED_correcting_capability}
\begin{proof}
    Based on \cite[Sec.~3.8.1]{MoonBook}: For \(\vec{y}_1\), the EaED assigns to the erased coordinates of \(\vec{y}\) the random vector \(\vec{p}\) and for \(\vec{y}_2\), the inverted vector \(\xor{\vec{p}}\).
    Because of this assignment, \(\vec{y}_1\) has \(D_1 \leq E\) errors in addition to the \(D\) errors of \(\vec{y}\).
    At the erased coordinates, \(\vec{y}_2\) has errors where \(\vec{y}_1\) has no errors because, for \(\vec{y}_2\), the inverted vector \(\xor{\vec{p}}\) is inserted. Hence, \(\vec{y}_2\) has  \(D_2 = E - D_1\) errors besides the \(D\) errors.
    Therefore, \(D_i \leq E / 2\) holds for at least one \(\vec{y}_i\) with \(i \in \{1, 2\}\). The total number of errors of this \(\vec{y}_i\) fulfills
    \[
        \phantom{.}
        D + D_i \leq D + \frac{E}{2} 
        \stackrel{\text{(a)}}{<}%
        t + 1
        \Rightarrow
        D + D_i \leq t
        ,
    \]
    where (a) holds because of \eqref{eqn:Condition_EaED_correcting_capability} and \(\ddesign(t) \leq 2 t + 2\) for the defined component codes.
    Hence, the BDD decodes at least one \(\vec{y}_i\) to the right codeword.
    
    It remains to prove that if both results \(\vec{w}_1\) and \(\vec{w}_2\) are codewords, the EaED selects the correct codeword. A wrong selection is only possible if one decoding result is not correct. Let \(\vec{w}_c\) be the correct and \(\vec{w}_e\) the erroneous result of \(\vec{w}_1\) and \(\vec{w}_2\). Suppose that \(\vec{w}_e\) is falsely selected. Then the following inequality contradicts \eqref{eqn:Condition_EaED_correcting_capability} (\(\dnE{\vec{y}}\) and \(\dE{\vec{y}}\) are the distance at the unerased and erased coordinates of \(\vec{y}\), respectively.):
    \begin{align*}
        &\ddesign(t)
        \leq
        \dmin
        \leq 
        \dH(\vec{w}_e, \vec{w}_c)  \\
        &= \dnE{\vec{y}}(\vec{w}_e, \vec{w}_c) + \dE{\vec{y}}(\vec{w}_e, \vec{w}_c)
        \smash{\stackrel{\text{(a)}}{\leq}}
        \dnE{\vec{y}}(\vec{w}_e, \vec{w}_c) + E \\
        &\stackrel{\text{(b)}}{\leq}
        \dnE{\vec{y}}(\vec{w}_e, \vec{y}) + \dnE{\vec{y}}(\vec{y}, \vec{w}_c) + E
        \stackrel{\text{(c)}}{\leq} 
        2 D + E\,,
    \end{align*}
    where (a) holds because the distance of two words of \(E\) coordinates is at most \(E\).
    (b) is the triangle inequality and (c) uses that \(\dnE{\vec{y}}(\vec{y}, \vec{w}_c) = D\) because \(\vec{y}\) has \(D\) errors at the unerased coordinates. Moreover, according to the assumption, \(\dnE{\vec{y}}(\vec{w}_e, \vec{y}) \leq \dnE{\vec{y}}(\vec{y}, \vec{w}_c) = D\), as otherwise \(\vec{w}_e\) would not have been selected.
\end{proof}

\section{Proof of Theorem~\ref{thm:Performance_CW_Independent}}
\label{proof:Performance_CW_Independent}
\begin{proof}
Let \(\oplus\colon\,\ZO^n \times \ZQO^n \to \ZQO^n\) be an operator that computes for each component
\[
    [\vec{a} \oplus \vec{b}]_i \coloneqq
    \begin{cases}
        a_i + b_i & \text{if \(b_i \neq \que\)} \\
        \que & \text{otherwise}.
    \end{cases}
\]
Then, it is easy to see that the BDD and the EaED+ fulfill the symmetry condition
\begin{equation}
\label{eqn:Symmetrie_Condition}
    \phantom{,}
    \DF(\vec{c} \oplus \vec{e}) = \vec{c} \oplus \DF(\vec{e})
    \quad
    \text{for all \(\vec{c} \in \CW\) and \(\vec{e}\)},
\end{equation}
where, in the case of the BDD, we define \(\vec{c} \oplus \text{fail} := \text{fail}\) and \(\vec{e}\) is an error-and-erasure pattern (\(\vec{e} \in \ZQO^n\)).

For the EaED, we interpret words of \(\ZQO^n\) as random variables taking values in \(\ZQO^n\), so that \(\DF_{\textsub{EaED}}\) is a function that transforms random variables.
Then, the EaED fulfills the symmetry condition
\begin{equation}
\label{eqn:Symmetrie_Condition_EaED}
    \phantom{,}
    \DF_{\textsub{EaED}}(\vec{c} \oplus \vec{e}) \mathrel{\overset{d}{=}} \vec{c} \oplus \DF_{\textsub{EaED}}(\vec{e})
    \quad
    \text{for all \(\vec{c} \in \CW\) and \(\vec{e}\)},
\end{equation}
where \(\vec{e}\) is an arbitrary random variable on \(\ZQO^n\) and ``\(\mathrel{\overset{d}{=}}\)'' means that the random variables are equal in distribution.

To prove this condition, we require an alternative description of the EaED. Let \(\Lambda(\vec{w}_1, \vec{w}_2, \vec{y})\) be the function that determines the decoding result of \(\vec{y}\) from \(\vec{w}_1\) and \(\vec{w}_2\) in decoding step~\ref{item:Decision_EaED} of the EaED (Sec.~\ref{sec:Decoder}). 
For the sake of clarity, we decompose \(\vec{y}\) into an unerased and and erased component: \(\vec{y} = [\vec{y}_\nE, \vec{\que}]\). Using \(\Lambda\), we get
\begin{equation*}
    \DF_{\textsub{EaED}}(\vec{y}) 
    = 
    \Lambda(\DF_{\textsub{BDD}}([\vec{y}_\nE, \vec{p}]), \DF_{\textsub{BDD}}([\vec{y}_\nE, \xor{\vec{p}}]), \vec{y}),
\end{equation*}
where the erased coordinates of \(\vec{y}\) are replaced by \(\vec{p}\), which is a uniform random variable on \(\ZO^n\). 
Let \(\vec{c}\in\CW\) be a codeword and \(\vec{e}\) be an arbitrary random variable on \(\ZQO^n\). We decompose \(\vec{c}\) and \(\vec{e}\) into the bits at the unerased and erased coordinates of \(\vec{e}\) giving
\(\vec{c} = [\vec{c}_\nE, \vec{c}_{\E}]\) and \(\vec{e} = [\vec{e}_\nE, \vec{\que}]\). By doing so, we get
\begin{align*}
    \label{eqn:symmetry_cond_proof}
    &\DF_{\textsub{EaED}}(\vec{c} \oplus \vec{e})
    =
    \DF_{\textsub{EaED}}([\vec{c}_\nE + \vec{e}_\nE, \vec{?}]) \nonumber \\
    &=
    \Lambda\left(
        \DF_{\textsub{BDD}}([\vec{c}_\nE + \vec{e}_\nE, \vec{p}]), 
        \DF_{\textsub{BDD}}([\vec{c}_\nE + \vec{e}_\nE, \xor{\vec{p}}]), 
        \vec{c} \oplus \vec{e}\right) \nonumber \\
    &\stackrel{\text{(a)}}{=}
    \begin{aligned}[t]
    \Lambda(
        &\vec{c} \oplus \DF_{\textsub{BDD}}([\vec{e}_\nE, \vec{c}_{\E} + \vec{p}]),\\
        &\vec{c} \oplus \DF_{\textsub{BDD}}([\vec{e}_\nE, \vec{c}_{\E} + \xor{\vec{p}}]), 
        \vec{c} \oplus \vec{e})
    \end{aligned}\\
    &=
    \vec{c} \oplus
    \Lambda\left(
        \DF_{\textsub{BDD}}([\vec{e}_\nE, \vec{c}_{\E} + \vec{p}]), 
        \DF_{\textsub{BDD}}([\vec{e}_\nE, \vec{c}_{\E} + \xor{\vec{p}}]), 
        \vec{e}\right), \nonumber
\end{align*}
where (a) uses the symmetry condition of the BDD.
Finally, we substitute \(\vec{p}_2 \coloneqq \vec{c}_E + \vec{p}\) and \(\xor{\vec{p}}_2 = \vec{c}_E + \xor{\vec{p}}\) resulting in
\begin{align*}
    \DF_{\textsub{EaED}}(\vec{c} \oplus \vec{e})
    &=
    \vec{c} \oplus
    \Lambda\left(
        \DF_{\textsub{BDD}}([\vec{e}_\nE, \vec{p}_2]), 
        \DF_{\textsub{BDD}}([\vec{e}_\nE, \xor{\vec{p}}_2]), 
        \vec{e}\right) \\
    &\mathrel{\overset{d}{=}} \vec{c} \oplus \DF_{\textsub{EaED}}(\vec{e}),
\end{align*}
because \(\vec{p}_2\), just as \(\vec{p}\), is a uniform random variable on \(\ZO^n\).
This proves the symmetry condition of the EaED.

Using the respective symmetry condition \eqref{eqn:Symmetrie_Condition} or \eqref{eqn:Symmetrie_Condition_EaED}, it can be shown, similar to \cite{RichardsonCapa}, that the expected number of errors and erasures of the whole GLDPC decoder is independent of the transmitted codeword.
\end{proof}

\section{Alternative description of EMP}
\label{sec:DE_EMP}
It is not immediately obvious that DE is allowed for the proposed message passing algorithm, as the channel input values are used in the CN update. Therefore, we present an alternative description of the same message passing algorithm in which the decoding of the component codes is moved from the CN to the VN update. This allows the insertion of the channel input value at the VN similar to the approach in \cite{jianApproachingCapacity2017}.

The message passing starts with the initialization \(\nu_{i, j}^{(1)} = \nu_{i, j^\prime}^{(1)} = r_i\) of the outgoing VN messages. During the CN update, each CN \(j\) combines the incoming messages into a vector. For each VN \(i\), connected at socket \(k=\sigma^{-1}(i) \in \{1, \dotsc, n\}\), it replaces the \(k\)-th symbol by a blank \(\square\) and returns the vector to the VN:
\[
    \phantom{.}
    \tilde{\nu}_{i, j}^{(\ell)} 
    \coloneqq
    (
        \nu_{\sigma_j(1), j}^{(\ell)}, \dotsc ,
        \nu_{\sigma_j(k-1), j}^{(\ell)}, \square, 
        \nu_{\sigma_j(k+1), j}^{(\ell)}, \dotsc
        , \nu_{\sigma_j(n), j}^{(\ell)}
    ).
\]
Due to the replacement of the \(k\)-th message, there is only extrinsic information passed.

In the VN update, each VN \(i\) receives two messages from its connected CNs \(j\), \(j^\prime\). To calculate the outgoing message for CN \(j\), the VN takes the incoming message of the respective other CN \(j^\prime\) and replaces the blank by its own channel input value \(r_i\) resulting in 
\[
    \phantom{.}
    \vec{y}_{j, \textsub{EMP}, k}^{(\ell)} 
    \coloneqq
    (
        \nu_{\sigma_j(1), j}^{(\ell)}, \dotsc ,
        \nu_{\sigma_j(k-1), j}^{(\ell)}, r_{i}, 
        \nu_{\sigma_j(k+1), j}^{(\ell)}, \dotsc
    )
    ,
\]
which was generated in the original algorithm in the CN update. Then \(\vec{y}_{j, \textsub{EMP}, k}^{(\ell)} \) is decoded and the symbol at the position of the blank \(\square\) is sent to CN \(j^\prime\):
\(\nu_{i, j^\prime}^{(\ell+1)} = \big[\DFC(\vec{y}_{j, \textsub{EMP}, k}^{(\ell)})\big]_k\).

It is easy to see that these VN-to-CN messages are identical with the ones of the original message passing algorithm introduced in Sec.~\ref{sec:background}. This proves that DE can be applied on the original message passing algorithm because only extrinsic information is passed in this scheme.

\section{Biweight Distribution Approximation}
\label{sec:App_Biweight_Dis}
In order that a pair \((\vec{c}_1, \vec{c}_2)\) has the configuration \(\vec{b}_\ast \coloneqq (b_{11}, b_{10}, b_{01}, b_{00})\), \(\vec{c}_1\) must have \(b_1 \coloneqq b_{11} + b_{10}\) \(1\)s and \(b_0 \coloneqq b_{01} + b_{00}\) \(0\)s, which is the case for \(A(b_1)\) codewords.  Moreover, the weight of \(\vec{c}_2\) must be \(\we(\vec{c}_2) \coloneqq b_{11} + b_{01}\) and \(\dH(\vec{c}_1, \vec{c}_2) = b_{10} + b_{01}\).
For \(B\), we use the approximation
\begin{align*}
    &B(b_{11}, b_{10}, b_{01}, b_{00}) \approx A(b_1)\\
    &\cdot \begin{cases}
        A(\we(\vec{c}_2)) & 
        \begin{aligned}[c]
            &A(\we(\vec{c}_2)) = 0 \\
            &\text{or \(\we(\vec{c}_2) \in \{0, n\}\)}
        \end{aligned} \\
        A(b_{11} + b_{01})
        \binom{b_1}{b_{11}} \binom{b_0}{b_{01}} 
        / \binom{n}{b_{11} + b_{01}} & 
        \we(\vec{c}_2) \leq \dH(\vec{c}_1, \vec{c}_2) \\
        A(b_{10} + b_{01}) 
        \binom{b_1}{b_{10}} \binom{b_0}{b_{01}} 
        / \binom{n}{b_{10} + b_{01}} &
        \we(\vec{c}_2) > \dH(\vec{c}_1, \vec{c}_2),
    \end{cases}
\end{align*}
which will be motivated in the following.

In the first case, either no valid pair exists, or \(\vec{c}_2\) is the all-zero or the all-one codeword (if existing). The all-zero or all-one codeword form together with all codewords of weight \(b_1\) a pair of the configuration \(\vec{b}_\ast\). In these cases, no approximation is necessary, and we have \(A(b_1) A(\we(\vec{c}_2))\) valid pairs.

For the second case, we first consider a fixed codeword \(\vec{c}_1\) from the \(A(b_1)\) codewords of weight \(b_1\). Now, we approximate the number of codewords \(\vec{c}_2\) that form together with this  \(\vec{c}_1\) a pair of \(\vec{b}_\ast\).
We know that \(A(\we(\vec{c}_2))\) codewords have the correct weight. Since we have no further information on the code, we assume that each of these codewords is independently and uniformly chosen at random from the set of the binary words of length \(n\) and weight \(\we(\vec{c}_2)\). 
Then, the probability that one of these random words has \(b_{11}\) \(1\)s at the \(1\)-coordinates of \(\vec{c}_1\) and \(b_{01}\) \(1\)s at the \(0\)-coordinates is
\(
    P = \binom{b_1}{b_{11}} \binom{b_0}{b_{01}} / \binom{n}{b_{11} + b_{01}}
\).
Hence, on average, \(A(\we(\vec{c}_2)) P\) codewords form together with \(\vec{c}_1\) a pair of the configuration \(\vec{b}_\ast\). Since there are \(A(b_1)\) possible codewords for \(\vec{c}_1\), we have \(A(b_1) A(\we(\vec{c}_2)) P\) pairs in total.

In the third case, we count the pairs that have the configuration
\[
    \vec{\tilde{b}}_\ast \coloneqq
    (\tilde{b}_{11}, \tilde{b}_{10}, \tilde{b}_{01}, \tilde{b}_{00}) =
    (b_{10}, b_{11}, b_{01}, b_{00})
\]
using the second case.
Each pair \((\vec{\tilde{c}}_1, \vec{\tilde{c}}_2)\) of \(\vec{\tilde{b}}_\ast\) can be transformed into a pair \((\vec{c}_1, \vec{c}_2)\) of \(\vec{b}_\ast\) by the bijective transformation
\(
    (\vec{c}_1, \vec{c}_2) = 
    (\vec{\tilde{c}}_1, \vec{\tilde{c}}_1 + \vec{\tilde{c}}_2)
\)
due to linearity.
Therefore, the biweight distribution of \(\vec{\tilde{b}}_\ast\) and \(\vec{b}_\ast\) are equal.

We observed that the results of the third approximation are better than the second one when \(\we(\vec{c}_2)\) is not too small.

\section{Proof of Theorem~\ref{thm:Fixed_Weight_Dis}}
\label{proof:Fixed_Weight_Dis}
\newcommand{\pair}{\mathcal{K}}
\begin{proof}
For the sake of clarity, we use the abbreviation \(\vec{b}_\ast \coloneqq (b_{11}, b_{10}, b_{01}, b_{00})\) in this proof. Let \(\pair(\vec{b}_\ast)\) be the set of ordered codeword pairs 
\((\vec{c}_1, \vec{c}_2) \in \CW^2\) that have the configuration \(\vec{b}_\ast\). Its cardinality is the biweight distribution \(B(\vec{b}_\ast)\).
Let \(\pair_k^{\alpha \beta}(\vec{b}_\ast) \subset \pair(\vec{b}_\ast)\) be the subset whose pairs \((\vec{c}_1, \vec{c}_2)\) additionally fulfill \(c_{1, k} = \alpha\), \(c_{2, k} = \beta\) with \(k \in \{1, \dotsc, n\}\) and \(\alpha, \beta \in \{0, 1\}\).
Its cardinality is 
\(
    B_k^{\alpha \beta}(\vec{b}_\ast) = |\pair_k^{\alpha \beta}(\vec{b}_\ast)|
\).

First, we show that for cyclic codes, \(B_k^{\alpha \beta}(\vec{b}_\ast)\) 
is independent of \(k\). Let \(i, j \in \{1, \dotsc n\}\), \(i \neq j\) be two different positions.
Consider the function
\[
    s\colon \, \pair_i^{\alpha \beta}(\vec{b}_\ast) \to \ZO^n \times \ZO^n
\]
that cyclically shifts each codeword of a pair \((\vec{c}_1, \vec{c}_2)\) so that the \(i\)-th position is shifted to the \(j\)-th position. Since the code is cyclic, the words of the shifted pair \((\vec{c}_1^s, \vec{c}_2^s) \coloneqq s\big((\vec{c}_1, \vec{c}_2)\big)\) are also codewords. In addition, the shift does not change the configuration \(\vec{b}_\ast\) of the pair, and we have \((\vec{c}_1^s, \vec{c}_2^s) \in \pair(\vec{b}_\ast)\).
Furthermore, according to the definition of \(s\), \(c_{1, j}^s = c_{1, i} = \alpha\) and \(c_{2, j}^s = c_{2, i} = \beta\), which implies \((\vec{c}_1^s, \vec{c}_2^s) \in \pair_j^{\alpha \beta}(\vec{b}_\ast)\). Hence, \(s\) is an injective function from \(\pair_i^{\alpha \beta}(\vec{b}_\ast)\) to \(\pair_j^{\alpha \beta}(\vec{b}_\ast)\). Since an injective function from \(\pair_j^{\alpha \beta}(\vec{b}_\ast)\) to \(\pair_i^{\alpha \beta}(\vec{b}_\ast)\) can be constructed in the same way, we obtain
\(
    B_i^{\alpha \beta}(\vec{b}_\ast) = B_j^{\alpha \beta}(\vec{b}_\ast)
\),
which proves the independence of \(B_k^{\alpha \beta}(\vec{b}_\ast)\) from \(k\). 

Next, we use this result to prove \eqref{eqn:Fixed_Biweight_Dis}. According to the definition of \(\vec{b}_\ast\), each pair 
\((\vec{c}_1, \vec{c}_2) \in \pair(\vec{b}_\ast)\) has \(b_{\alpha \beta}\) positions, where \(c_{1, i} = \alpha\) and \(c_{2, i} = \beta\) with \(i \in \{1, \dotsc, n\}\), and therefore, is contained in
\(b_{\alpha \beta}\) sets of \(\{\pair_i^{\alpha \beta}(\vec{b}_\ast)\}_i\). Hence, the sum over the cardinalities \(B_i^{\alpha \beta}(\vec{b}_\ast)\) of \(\{\pair_i^{\alpha \beta}(\vec{b}_\ast)\}_i\) is
\begin{align}
    \sum_{i = 0}^{n} B_i^{\alpha \beta}(\vec{b}_\ast) 
    &= b_{\alpha \beta} B(\vec{b}_\ast).
    \label{eqn:Sum1_Weight_Dis}
\intertext{
Since \(B_i^{\alpha \beta}(\vec{b}_\ast)\) is independent of \(i\), we also have 
}
    \sum_{i = 0}^{n} B_i^{\alpha \beta}(\vec{b}_\ast)
    &= n B_k^{\alpha \beta}(\vec{b}_\ast)
    \label{eqn:Sum2_Weight_Dis}
\end{align}
for any \(k \in \{1, \dotsc n\}\).
Solving \eqref{eqn:Sum1_Weight_Dis} and \eqref{eqn:Sum2_Weight_Dis} for \(B_k^{\alpha \beta}(\vec{b}_\ast)\) proves \eqref{eqn:Fixed_Biweight_Dis} of the theorem.

Equation~\eqref{eqn:Fixed_Weight_Dis} follows directly from \eqref{eqn:Fixed_Biweight_Dis} and the identities
\(A(b_1) = B(b_1, 0, 0, n - b_1)\) and \(A_k^{\alpha}(b_1) = B_k^{\alpha\alpha}(b_1, 0, 0, n - b_1)\).
\end{proof}

\section{Decoder Transition Probabilities EaED}
\label{sec:Decoder_Trans_Probs_EaED}
This section presents the derivation of \(\T{\alpha}{\beta}{\Dp, \Ep}\) for the EaED.
In the following, we use \(A \sqcup B\) to denote the union of two disjoint sets \(A\), \(B\) and \(\xor{\beta}\) to negate a binary value \(\beta\).
For a set of codewords \(A \subset \CW\), we define
\(\Sp_t(A) \coloneqq \bigcup_{\vec{c} \in A} \Sp_t(\vec{c})\).

\subsection{Random Experiment}
In decoding steps~\ref{item:GenerateY1} and \ref{item:GenerateY2} of the EaED described in Sec.~\ref{sec:Decoder}, the EaED generates from an error pattern \(\vec{e} \in \Omega\) an error pattern pair \((\vec{e}_1, \vec{e}_2)\).
To describe these pairs, let
\(
    \OmegaP(\alpha_1, \alpha_2) \subset (\ZO^{n})^2
\)
(\(\alpha_1, \alpha_2 \in \{0, 1\}\)) be the set of ordered binary error pattern pairs \((\vec{e}_1, \vec{e}_2)\) for which the following conditions hold:
\begin{itemize}
    \item The distance between \(\vec{e}_1\), \(\vec{e}_2\) at \(\compl{k}\) is 
    \(\operatorname{d}_{\compl{k}}(\vec{e}_1, \vec{e}_2) = \Ep\).
    \item There are \(\Dp\) positions of \(\compl{k}\) at which \(\vec{e}_1\), \(\vec{e}_2\) have a \(1\).
    \item
    \(e_{1, k} = \alpha_1\) and
    \(e_{2, k} = \alpha_2\).
\end{itemize}
Then, the EaED generates from \(\vec{e} \in \Omega\) a pair from the set
\[
    \phantom{.}
    \OmegaP
    \coloneqq 
    \begin{cases}
        \OmegaP(\alpha, \alpha) & \alpha \neq \que \\
        \OmegaP(1, 0) \sqcup \OmegaP(0, 1) & \alpha = \que.
    \end{cases}
\]
It is easy to see that each pair of \(\OmegaP\) occurs with the same probability. Hence, \(\T{\alpha}{\beta}{\Dp, \Ep}\) can be calculated through
\(
    \T{\alpha}{\beta}{\Dp, \Ep} = |\Mset|/|\OmegaP|
\),
where \(\Mset \subset \OmegaP\) contains only the pairs \((\vec{e}_1, \vec{e}_2) \in \OmegaP\) whose decoding result \(\vec{w}\) fulfills \(w_k = \beta\).

\(
    |\OmegaP| = |\Omega| 2^E
\)
because the \(E \coloneqq \Ep + \ind{\alpha = \que}\) erased positions of \(\vec{e}\) can be filled with \(2^E\) different binary values to generate a pair. 
\(|\Mset|\) will be calculated in the following sections.

\subsection{Decomposition of \(|\Mset|\)}
The analysis method in Sec.~\ref{sec:Calc_M3} below can only be applied to specific subsets. Therefore, we first decompose \(\Mset\) into such subsets.

For \(\alpha = \que\), the \(k\)-th bits of the pairs of \(\Mset\) are not fixed. To avoid this, we define 
\(
    \Mset(\alpha_1, \alpha_2) \coloneqq 
    \Mset \cap \OmegaP(\alpha_1, \alpha_2)
\) 
for \(\alpha_1, \alpha_2 \in \ZO\), whose pairs have fixed \(k\)-th bits and get
\begin{equation}\label{eqn:Decomp_M_into_M_alpha1_alpha2}
    \Mset =
    \begin{cases}
        \Mset(\alpha, \alpha) & \alpha \neq \que \\
        \Mset(1, 0) \sqcup \Mset(0, 1) & \alpha = \que.
    \end{cases}
\end{equation}
Below, we will calculate \(|\Mset(\alpha_1, \alpha_2)|\) for arbitrary \(\alpha_1, \alpha_2 \in \ZO\). Then, we can obtain \(|\Mset|\) by \eqref{eqn:Decomp_M_into_M_alpha1_alpha2}.

The following sets \(\Mset_i\) are all subsets of \(\OmegaP(\alpha_1, \alpha_2)\), so for the sake of clarity, we do not specify the domain in the set definitions.
In the case of \(w_k = \beta\), the decoding had to be successful because a failed decoding would result in \(w_k = \alpha \neq \beta\).
That is why \(w_k = \beta\) is only possible for pairs of
\[
    \Mset_1 \coloneqq
    \{
        \text{\(\vec{e}_1 \in \Sp_t(\CW_k^\beta)\) or \(\vec{e}_2 \in \Sp_t(\CW_k^\beta)\)}
    \}.
\]
However, there are pairs \((\vec{e}_1, \vec{e}_2) \in \Mset_1\) where one \(\vec{e}_i\) is closer to a codeword \(\vec{c} \in \CW_k^{\xor{\beta}}\) than the other is to a codeword \(\vec{c} \in \CW_k^{\beta}\). The decoding result \(\vec{w}\) of these pairs fulfills \(w_k = \xor{\beta}\).
Removing these pairs from \(\Mset_1\) results in 
\[
    \Mset(\alpha_1, \alpha_2) 
    = \Mset_1 \setminus
    \underbrace{\{
        (\vec{e}_1, \vec{e}_2) \in \Mset_1, 
        w_k = \xor{\beta}
    \}}_{\eqqcolon \Mset_2}.
\]
Since \(\Mset_2 \subset \Mset_1\), we have \(|\Mset(\alpha_1, \alpha_2)| = |\Mset_1| - |\Mset_2|\).
A further decomposition
\begin{align*}
    \Mset_1 &= 
    \{\vec{e}_1 \in \Sp_t(\CW_k^\beta)\} \cup \{\vec{e}_2 \in \Sp_t(\CW_k^\beta)\}, \\
    \Mset_2 &= 
    \{
        \vec{e}_1 \in \Sp_t(\CW_k^\beta),
        w_k = \xor{\beta}
    \}
    \sqcup
    \{
        \vec{e}_2 \in \Sp_t(\CW_k^\beta), 
        w_k = \xor{\beta}
    \}
\end{align*}
yields
\begin{align*}
    |\Mset_1| &= 2 \cdot 
    |\underbrace{\{
        \vec{e}_1 \in \Sp_t(\CW_k^\beta)   
    \}}_{\eqqcolon \Mset_3}| 
    - 
    |\underbrace{\{
        \vec{e}_1, \vec{e}_2 \in \Sp_t(\CW_k^\beta)
    \}}_{\eqqcolon \Mset_4}|, \\
    |\Mset_2| &= 2 \cdot 
    |\underbrace{\{
        \vec{e}_1 \in S_t(\CW_k^\beta), w_k = \xor{\beta}
    \}}_{\eqqcolon \Mset_5}|.
\end{align*}
because the same decoding method is used for \(\vec{e}_1\) and \(\vec{e}_2\).

\subsection{Derivation of \(|\Mset_3|\)}\label{sec:Calc_M3}
To calculate \(|\Mset_3|\), the following algorithm could be used: Iterate over all triples 
\((\vec{c}, \vec{e}_1, \vec{e}_2)\) with \(\vec{c} \in \CW_k^\beta\)
and count the cases in which \((\vec{e}_1, \vec{e}_2) \in \OmegaP(\alpha_1, \alpha_2)\)
and \(\vec{e}_1 \in \Sp_t(\vec{c})\) holds.%
\footnote{Since the spheres \(\Sp_t(\vec{c_i})\) are pairwise disjoint, no pair is counted twice.}

Since this algorithm is too complex, we use an approach similar to that used in Sec.~\ref{sec:Trans_Prob_EaEDPlus} for the EaED+:
In that section, not all tuples \((\vec{c}, \vec{e})\) are counted individually but all tuples with identical coefficients \(b_{1\que}\), \(b_{10}\), \(b_{01} \coloneqq \Dp - b_{10}\) and \(b_{0\que} \coloneqq \Ep - b_{1\que}\)  could be treated at once.
Figure~\ref{fig:M3_configuration} shows the generalization of this concept for the triple \((\vec{c}, \vec{e}_1, \vec{e}_2)\). 
\begin{figure}
    \centering
    \includegraphics{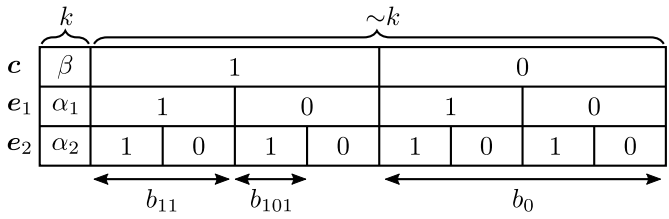}
    \caption{Schematic illustration of the coefficients describing the configuration \(\vec{b}_\ast\) of the triple \((\vec{c}, \vec{e}_1, \vec{e}_2)\).
    For example, \(b_{101}\) determines the number of positions of \(\compl{k}\) where \(\vec{c}\) has a \(1\), \(\vec{e}_1\) has a \(0\) and \(\vec{e}_2\) has a \(1\).}
    \label{fig:M3_configuration}
\end{figure}
Because of \(\vec{c} \in \CW_k^\beta\) and \((\vec{e}_1, \vec{e}_2) \in \OmegaP(\alpha_1, \alpha_2)\), the bits at \(k\) are fixed by \(\beta\), \(\alpha_1\) and \(\alpha_2\).
The overlaps of the bits at \(\compl{k}\) are described by the coefficients
\begin{align*}
    b_{f} &\coloneqq |\{i \in \compl{k} : c_i = f\}|,\\
    b_{fg} &\coloneqq |\{i \in \compl{k} : c_i = f, e_{1, i} = g\}|, \\
    b_{fgh} &\coloneqq |\{i \in \compl{k} : c_i = f, e_{1, i} = g, e_{2, i} = h\}|,
\end{align*}
which are collectively called configuration \(\vec{b}_\ast\).

Let \(\config(\vec{b}_\ast)\) be the set of all triples whose bits at \(k\) are \(\beta\), \(\alpha_1\), \(\alpha_2\) and the positions at \(\compl{k}\) have the configuration \(\vec{b}_\ast\).
Then it is possible to determine if the triples of \(\config(\vec{b}_\ast)\) are counted in the algorithm above, although the exact positions of the symbols are not known:
\begin{subequations}
\label{eqn:Condition_M3_Config}
For \(\vec{e}_1 \in S_t(\vec{c})\), the condition%
\begin{align}
    \dH(\vec{c}, \vec{e}_1) &= 
    b_{10} + b_{01} + \ind{\alpha_1 \neq \beta} \leq t
    \intertext{%
    must hold and for \((\vec{e}_1, \vec{e}_2) \in \OmegaP(\alpha_1, \alpha_2)\), we get
    }
    \operatorname{d}_{\compl{k}}(\vec{e}_1, \vec{e}_2) 
    &= 
    \sum_{\mathclap{v \in \{0, 1\}}} (b_{v10} + b_{v01}) = \Ep,\\
    \sum_{v \in \{0, 1\}} b_{v11} &= \Dp.
\end{align}
\end{subequations}
To calculate \(|\Mset_3|\), we iterate over all configurations fulfilling \eqref{eqn:Condition_M3_Config} and sum the numbers of triples that belong to them:
\begin{align*}
    &|\Mset_3| = \\
    &\sum\limits_{\substack{
        \vec{b}_\ast
        \text{ if \eqref{eqn:Condition_M3_Config}}
    }}
    \underbrace{
        \vphantom{\prod_{v \in \{0, 1\}} \binom{b_v}{b_{v1}}}
        A_k^{\beta}(b_1 + \ind{\beta = 1})
    }_{\text{Ways for \(\vec{c}\)}}
    \underbrace{
        \prod_{v \in \{0, 1\}} \binom{b_v}{b_{v1}}
    }_{\substack{%
        \text{Ways for \(\vec{e}_1\) given \(\vec{c}\)}\\
    }}
    \underbrace{
        \prod_{vv \in \{0, 1\}^2} \binom{b_{vv}}{b_{vv1}}
    }_{\substack{%
        \text{Ways for \(\vec{e}_2\) given \(\vec{c}\), \(\vec{e}_1\)}\\
    }}
    .
\end{align*}

\subsection{Derivation of \(|\Mset_4|\) and \(|\Mset_5|\)}
In the following, \(\beta_1\) and \(\beta_2\) denote binary values, where, in the derivation of \(|\Mset_4|\), we set \(\beta_1 = \beta_2 = \beta\) and in the derivation of \(|\Mset_5|\), \(\beta_1 = \beta\) and \(\beta_2 = \xor{\beta}\).

To calculate \(|\Mset_4|\) and \(|\Mset_5|\), the same approach as before is used for tuples \((\vec{c}_1, \vec{c}_2, \vec{e}_2, \vec{e}_1)\), where position \(k\) is fixed by
\(c_{1, k} = \beta_1\), \(c_{2, k} = \beta_2\), \(e_{2, k} = \alpha_2\) and \(e_{1, k} = \alpha_1\).
The overlaps at \(\compl{k}\) are described by configurations \(\vec{b}_\ast\), whose coefficients have up to \(4\) indices.
As before, the \(i\)-th index in a coefficient denotes the symbol of the \(i\)-th word of \((\vec{c}_1, \vec{c}_2, \vec{e}_2, \vec{e}_1)\).

For \(|\Mset_4|\), we count all tuples for which \((\vec{e}_1, \vec{e}_2) \in \OmegaP(\alpha_1, \alpha_2)\) and \(\vec{e}_1 \in \Sp_t(\vec{c}_1)\), \(\vec{e}_2 \in \Sp_t(\vec{c}_2)\)  holds.
In the configuration domain, \((\vec{e}_1, \vec{e}_2) \in \OmegaP(\alpha_1, \alpha_2)\) transforms into
\begin{subequations}
\label{eqn:Condition_M4_M5}
\begin{align}
    \operatorname{d}_{\compl{k}}(\vec{e}_1, \vec{e}_2) 
    &= 
    \sum_{\mathclap{vv \in \{0, 1\}^2}}
    (b_{vv10} + b_{vv01}) 
    = \Ep,\\
    \sum_{\mathclap{vv \in \{0, 1\}^2}} b_{vv11} 
    &= \Dp
\end{align}
and \(\vec{e}_1 \in \Sp_t(\vec{c}_1)\), \(\vec{e}_2 \in \Sp_t(\vec{c}_2)\) transform into
\begin{align}
    \dH(\vec{c}_1, \vec{e}_1) &= 
    \sum_{\mathclap{vv \in \{0, 1\}^2}}
    (b_{1vv0} + b_{0vv1}) 
    + \ind{\alpha_1 \neq \beta_1} 
    \leq t, \\
    \dH(\vec{c}_2, \vec{e}_2) &=
    \sum_{\mathclap{v \in \{0, 1\}}}
    (b_{v10} + b_{v01})
    + \ind{\alpha_2 \neq \beta_2} 
    \leq t,
\end{align}
\end{subequations}
where \(\beta_1 = \beta_2 = \beta\).
By summing the number of tuples of each configuration that fulfills \eqref{eqn:Condition_M4_M5}, we get
\begin{multline*}
    \phantom{.}
    |\Mset_4| =
    \smash{
        \smashoperator[l]{\sum_{\text{
            \(\vec{b}_\ast\) if \eqref{eqn:Condition_M4_M5}
        }}}
        \Bigg(
    }
    B_k^{\beta \beta}(b_{11} + \ind{\beta = 1}, b_{10}, b_{01}, b_{00} +\ind{\beta = 0}) \\
    \prod_{vv \in \{0, 1\}^2} \binom{b_{vv}}{b_{vv1}}
    \prod_{vvv \in \{0, 1\}^3} \binom{b_{vvv}}{b_{vvv1}}
    \Bigg).
\end{multline*}

For \(|\Mset_5|\), again, only tuples fulfilling \((\vec{e}_1, \vec{e}_2) \in \OmegaP(\alpha_1, \alpha_2)\) and \(\vec{e}_1 \in \Sp_t(\vec{c}_1)\), \(\vec{e}_2 \in \Sp_t(\vec{c}_2)\) are counted, which transforms into \eqref{eqn:Condition_M4_M5} (with \(\beta_1 = \beta\) and \(\beta_2 = \xor{\beta}\)) in the configuration domain.
In addition, 
\begin{equation}
\label{eqn:Condition_M5}
    \dnE{\vec{e}}(\vec{c}_2, \vec{e}) \leq \dnE{\vec{e}}(\vec{c}_1, \vec{e})
\end{equation}
must hold so that the decoder decodes to \(\vec{c}_2\) resulting in \(w_k = \xor{\beta}\).
In configuration domain, these distances are obtained by
\begin{align*}
    \phantom{.}
    \dnE{\vec{e}}(\vec{c}_1, \vec{e}) &= 
    \sum_{\mathclap{v \in \{0, 1\}}} (b_{1v00} + b_{0v11})
    + \ind{\text{\(\alpha \neq \que\) and \(\alpha \neq \beta\)}}, \\
    \dnE{\vec{e}}(\vec{c}_2, \vec{e}) &= 
    \sum_{\mathclap{v \in \{0, 1\}}} (b_{v100} + b_{v011})
    + \ind{\text{\(\alpha \neq \que\) and \(\alpha \neq \xor{\beta}\)}}
    .
\end{align*}
Summing over the valid configurations yields
\begin{multline*}
    |\Mset_5| =
    \smashoperator[l]{\sum_{\text{
        \(\vec{b}_\ast\) if \eqref{eqn:Condition_M4_M5}, \eqref{eqn:Condition_M5}
    }}}\!
    \Bigg(
    B_k^{\beta \xor{\beta}}(b_{11}, b_{10} + \ind{\beta = 1}, b_{01} + \ind{\beta = 0}, b_{00}) \\
    \prod_{vv \in \{0, 1\}^2} \binom{b_{vv}}{b_{vv1}}
    \prod_{vvv \in \{0, 1\}^3} \binom{b_{vvv}}{b_{vvv1}}
    \operatorname{Corr}(\vec{b}_\ast)
    \Bigg).
\end{multline*}
In the case of \(\dnE{\vec{e}}(\vec{c}_2, \vec{e}) = \dnE{\vec{e}}(\vec{c}_1, \vec{e})\), the decoder chooses between \(\vec{c}_1\) and \(\vec{c}_2\) at random, and therefore, on average only half of the pairs have a result with \(w_k = \xor{\beta}\).
To take this into account, the correction term \(\operatorname{Corr}(\vec{b}_\ast)\) is \(\frac{1}{2}\) in this case and \(1\) otherwise.

Now, the transition probabilities can be calculated by putting all contributions together.


\begin{thebibliography}{10}
\providecommand{\url}[1]{#1}
\csname url@samestyle\endcsname
\providecommand{\newblock}{\relax}
\providecommand{\bibinfo}[2]{#2}
\providecommand{\BIBentrySTDinterwordspacing}{\spaceskip=0pt\relax}
\providecommand{\BIBentryALTinterwordstretchfactor}{4}
\providecommand{\BIBentryALTinterwordspacing}{\spaceskip=\fontdimen2\font plus
\BIBentryALTinterwordstretchfactor\fontdimen3\font minus
  \fontdimen4\font\relax}
\providecommand{\BIBforeignlanguage}[2]{{%
\expandafter\ifx\csname l@#1\endcsname\relax
\typeout{** WARNING: IEEEtran.bst: No hyphenation pattern has been}%
\typeout{** loaded for the language `#1'. Using the pattern for}%
\typeout{** the default language instead.}%
\else
\language=\csname l@#1\endcsname
\fi
#2}}
\providecommand{\BIBdecl}{\relax}
\BIBdecl

\stepcounter{enumiv}\bibitem[\arabic{enumiv}]{sun2020800g}
H.~Sun \emph{et~al.}, ``{800G} {DSP} {ASIC} design using probabilistic shaping
  and digital sub-carrier multiplexing,'' \emph{J. Lightw. Technol.}, vol.~38,
  no.~17, pp. 4744--4756, May 2020.

\stepcounter{enumiv}\bibitem[\arabic{enumiv}]{graell20forward}
A.~{Graell i Amat} and L.~Schmalen, ``Forward error correction for optical
  transponders,'' in \emph{Springer Handbook of Optical Networks}.\hskip 1em
  plus 0.5em minus 0.4em\relax Springer, 2020, pp. 177--257.

\stepcounter{enumiv}\bibitem[\arabic{enumiv}]{Smith2012}
B.~Smith, A.~Farhood, A.~Hunt, F.~Kschischang, and J.~Lodge, ``Staircase codes:
  {FEC} for 100 {Gb/s} {OTN},'' \emph{J.~Lightw. Technol.}, 2012.

\stepcounter{enumiv}\bibitem[\arabic{enumiv}]{barakatain2018low}
M.~Barakatain and F.~R. Kschischang, ``Low-complexity concatenated
  {LDPC}-staircase codes,'' \emph{J. Lightw. Technol.}, vol.~36, no.~12, pp.
  2443--2449, 2018.

\stepcounter{enumiv}\bibitem[\arabic{enumiv}]{Elias54}
P.~Elias, ``{Error-free coding},'' \emph{IRE Trans. Inf. Theory}, vol. PGIT-4,
  pp. 29--37, Sep. 1954.

\stepcounter{enumiv}\bibitem[\arabic{enumiv}]{Jian2013}
Y.-Y. Jian, H.~D. Pfister, K.~R. Narayanan, R.~Rao, and R.~Mazahreh,
  ``Iterative hard-decision decoding of braided {BCH} codes for high-speed
  optical communication,'' in \emph{Proc.~IEEE Global Telecommun. Conf.
  (GLOBECOM)}, Atlanta, GA, Dec. 2013.

\stepcounter{enumiv}\bibitem[\arabic{enumiv}]{sukmadji2019zipper}
A.~Y. Sukmadji, U.~Mart{\'\i}nez-Pe{\~n}as, and F.~R. Kschischang, ``Zipper
  codes: {Spatially}-coupled product-like codes with iterative algebraic
  decoding,'' in \emph{Proc. Canadian Workshop on Information Theory
  (CWIT)}.\hskip 1em plus 0.5em minus 0.4em\relax IEEE, 2019.

\stepcounter{enumiv}\bibitem[\arabic{enumiv}]{hager2018approaching}
C.~Häger and H.~D. Pfister, ``Approaching miscorrection-free performance of
  product codes with anchor decoding,'' \emph{{IEEE} Trans. Commun.}, vol.~66,
  no.~7, pp. 2797--2808, 2018.

\stepcounter{enumiv}\bibitem[\arabic{enumiv}]{sheikh2019binary}
A.~Sheikh, A.~{Graell i Amat}, and G.~Liva, ``Binary message passing decoding
  of product-like codes,'' \emph{{IEEE} Trans. Commun.}, vol.~67, no.~12, pp.
  8167--8178, 2019.

\stepcounter{enumiv}\bibitem[\arabic{enumiv}]{sheikh2020novel}
A.~Sheikh, A.~{Graell i Amat}, and A.~Alvarado, ``Novel high-throughput
  decoding algorithms for product and staircase codes based on
  error-and-erasure decoding,'' \emph{arXiv preprint arXiv:2008.02181}, 2020.

\stepcounter{enumiv}\bibitem[\arabic{enumiv}]{lechner2011analysis}
G.~Lechner, T.~Pedersen, and G.~Kramer, ``Analysis and design of binary message
  passing decoders,'' \emph{{IEEE} Trans. Commun.}, vol.~60, no.~3, pp.
  601--607, 2011.

\stepcounter{enumiv}\bibitem[\arabic{enumiv}]{yacoub2019protograph}
E.~{Ben Yacoub}, F.~Steiner, B.~Matuz, and G.~Liva, ``Protograph-based ldpc
  code design for ternary message passing decoding,'' in \emph{Proc. Intern.
  ITG Conf. on Systems, Communic. and Coding}.\hskip 1em plus 0.5em minus
  0.4em\relax Rostock, Germany: VDE, 2019.

\stepcounter{enumiv}\bibitem[\arabic{enumiv}]{forney1966generalized}
G.~D. {Forney, Jr.}, ``Generalized minimum distance decoding,'' \emph{{IEEE}
  Trans. Inf. Theory}, vol.~12, no.~2, pp. 125--131, 1966.

\stepcounter{enumiv}\bibitem[\arabic{enumiv}]{Wicker95}
S.~B. Wicker, \emph{{Error Control Systems for Digital Communication and
  Storage}}.\hskip 1em plus 0.5em minus 0.4em\relax Englewood Cliffs, NJ, USA:
  Prentice Hall, 1995.

\stepcounter{enumiv}\bibitem[\arabic{enumiv}]{Blahut}
R.~Blahut, \emph{Algebraic Codes for Data Transmission}.\hskip 1em plus 0.5em
  minus 0.4em\relax New York: Cambridge University Press, 2003.

\stepcounter{enumiv}\bibitem[\arabic{enumiv}]{pillai2014end}
B.~S.~G. Pillai, B.~Sedighi, K.~Guan, N.~P. Anthapadmanabhan, W.~Shieh, K.~J.
  Hinton, and R.~S. Tucker, ``End-to-end energy modeling and analysis of
  long-haul coherent transmission systems,'' \emph{J. Lightw. Technol.},
  vol.~32, no.~18, pp. 3093--3111, 2014.

\stepcounter{enumiv}\bibitem[\arabic{enumiv}]{ossieur2018asic}
P.~Ossieur, G.~Coudyzer, D.~Kelly, X.~Yin, P.~D. Townsend, and J.~Bauwelinck,
  ``{ASIC} implementation challenges for next generation access networks,'' in
  \emph{Proc. Signal Processing in Photonic Communications}, Zurich, CH, 2018,
  pp. SpTh4F--1.

\stepcounter{enumiv}\bibitem[\arabic{enumiv}]{MoonBook}
T.~K. Moon, \emph{{Error Correction Coding - Mathematical Methods and
  Algorithms}}.\hskip 1em plus 0.5em minus 0.4em\relax John Wiley \& Sons,
  Inc., 2005.

\stepcounter{enumiv}\bibitem[\arabic{enumiv}]{wainberg1972error}
S.~Wainberg, ``Error-erasure decoding of product codes,'' \emph{{IEEE} Trans.
  Inf. Theory}, vol.~18, no.~6, pp. 821--823, 1972.

\stepcounter{enumiv}\bibitem[\arabic{enumiv}]{RichardsonCapa}
T.~Richardson and R.~Urbanke, ``The capacity of low-density parity-check codes
  under belief propagation decoding,'' \emph{{IEEE} Trans. Inf. Theory},
  vol.~47, no.~2, pp. 599--618, Feb. 2001.

\stepcounter{enumiv}\bibitem[\arabic{enumiv}]{sumkmadji2020zipper}
A.~Y. Sukmadji, ``Zipper codes: High-rate spatially-coupled codes with
  algebraic component codes,'' Master's thesis, University of Toronto, 2020.

\stepcounter{enumiv}\bibitem[\arabic{enumiv}]{soma2021errors}
D.~K. Soma, A.~K. Pradhan, and K.~Narayanan, ``Errors and erasures decoding of
  product codes for optical transport networks,'' \emph{{IEEE} Commun. Lett.},
  pp. 1--1, 2021.

\stepcounter{enumiv}\bibitem[\arabic{enumiv}]{jianApproachingCapacity2017}
Y.~Jian, H.~D. Pfister, and K.~R. Narayanan, ``Approaching {{capacity}} at
  {{high rates with iterative hard}}-{{decision decoding}},'' \emph{{IEEE}
  Trans. Inf. Theory}, vol.~63, no.~9, pp. 5752--5773, Sep. 2017.

\stepcounter{enumiv}\bibitem[\arabic{enumiv}]{MacWilliamsSloane}
F.~MacWilliams and N.~Sloane, \emph{The Theory of Error-Correcting
  Codes}.\hskip 1em plus 0.5em minus 0.4em\relax North Holland, 1977.

\stepcounter{enumiv}\bibitem[\arabic{enumiv}]{hager2017density}
C.~H{\"a}ger, H.~D. Pfister, A.~{Graell i Amat}, and F.~Br{\"a}nnstr{\"o}m,
  ``Density evolution for deterministic generalized product codes on the binary
  erasure channel at high rates,'' \emph{{IEEE} Trans. Inf. Theory}, vol.~63,
  no.~7, pp. 4357--4378, 2017.

\stepcounter{enumiv}\bibitem[\arabic{enumiv}]{zhang2017spatially}
L.~M. Zhang, D.~Truhachev, and F.~R. Kschischang, ``Spatially coupled
  split-component codes with iterative algebraic decoding,'' \emph{{IEEE}
  Trans. Inf. Theory}, vol.~64, no.~1, pp. 205--224, 2017.

\stepcounter{enumiv}\bibitem[\arabic{enumiv}]{hou2003capacity}
J.~Hou, P.~H. Siegel, L.~B. Milstein, and H.~D. Pfister, ``Capacity-approaching
  bandwidth-efficient coded modulation schemes based on low-density
  parity-check codes,'' \emph{{IEEE} Trans. Inf. Theory}, vol.~49, no.~9, pp.
  2141--2155, Sep. 2003.

\stepcounter{enumiv}\bibitem[\arabic{enumiv}]{sidel1971weight}
V.~M. Sidel'nikov, ``Weight spectrum of binary
  {Bose}--{Chaudhuri}--{Hocquenghem} codes,'' \emph{Problemy Peredachi
  Informatsii}, vol.~7, no.~1, pp. 14--22, 1971.

\end{thebibliography}
\end{document}